\def\isarxiv{1}
\def\paperTitle{Modern Hopfield Networks Require Chain-of-Thought to Solve $\mathsf{NC}^1$-Hard Problems}
\def\paperAuthor{
Yang Cao\thanks{Wyoming Seminary.}
\and
Xiaoyu Li\thanks{University of New South Wales.}
\and
Yuanpeng Li\thanks{Independent Researcher.}
\and
Yingyu Liang\thanks{The University of Hong Kong.}
\and
Zhenmei Shi\thanks{University of Wisconsin-Madison.}
\and
Zhao Song\thanks{\texttt{magic.linuxkde@gmail.com}. Simons Institute for the Theory of Computing, University of California, Berkeley.}
}
\theoremstyle{plain}
\newtheorem{theorem}{Theorem}[section]
\newtheorem{lemma}[theorem]{Lemma}
\newtheorem{definition}[theorem]{Definition}
\newtheorem{fact}[theorem]{Fact}
\renewcommand\cite\citep
\newcommand{\wt}{\widetilde}
\newcommand{\R}{\mathbb{R}}
\newcommand{\relu}{\mathsf{ReLU}}
\renewcommand{\tilde}{\wt}
\DeclareMathOperator{\poly}{poly}
\DeclareMathOperator{\diag}{diag}
\begin{document}

\ifdefined\isarxiv

\date{}
\title{\paperTitle}
\author{\paperAuthor}

\else

\title{\paperTitle}

\author{Antiquus S.~Hippocampus, Natalia Cerebro \& Amelie P. Amygdale \thanks{ Use footnote for providing further information
about author (webpage, alternative address)---\emph{not} for acknowledging
funding agencies.  Funding acknowledgements go at the end of the paper.} \\
Department of Computer Science\\
Cranberry-Lemon University\\
Pittsburgh, PA 15213, USA \\
\texttt{\{hippo,brain,jen\}@cs.cranberry-lemon.edu} \\
\And
Ji Q. Ren \& Yevgeny LeNet \\
Department of Computational Neuroscience \\
University of the Witwatersrand \\
Joburg, South Africa \\
\texttt{\{robot,net\}@wits.ac.za} \\
\AND
Coauthor \\
Affiliation \\
Address \\
\texttt{email}
}

%

\newcommand{\fix}{\marginpar{FIX}}
\newcommand{\new}{\marginpar{NEW}}

\maketitle

\fi

\ifdefined\isarxiv
\begin{titlepage}
  \maketitle
  \begin{abstract}
Modern Hopfield Networks (MHNs) have emerged as powerful components in deep learning, serving as effective replacements for pooling layers, LSTMs, and attention mechanisms. While recent advancements have significantly improved their storage capacity and retrieval efficiency, their fundamental theoretical boundaries remain underexplored. In this paper, we rigorously characterize the expressive power of MHNs through the lens of circuit complexity theory. We prove that $\mathrm{poly}(n)$-precision MHNs with constant depth and linear hidden dimension fall within the $\mathsf{DLOGTIME}$-uniform $\mathsf{TC}^0$ complexity class. Consequently, assuming $\mathsf{TC}^0 \neq \mathsf{NC}^1$, we demonstrate that these architectures are incapable of solving $\mathsf{NC}^1$-hard problems, such as undirected graph connectivity and tree isomorphism. We further extend these impossibility results to Kernelized Hopfield Networks. However, we show that these limitations are not absolute: we prove that equipping MHNs with a Chain-of-Thought (CoT) mechanism enables them to transcend the $\mathsf{TC}^0$ barrier, allowing them to solve inherently serial problems like the word problem for the permutation group $S_5$. Collectively, our results delineate a fine-grained boundary between the capabilities of standard MHNs and those augmented with reasoning steps.

  \end{abstract}
  \thispagestyle{empty}
\end{titlepage}
\newpage

\else

\begin{abstract}

\end{abstract}

\fi


\section{Introduction}\label{sec:introduction}
Hopfield networks \citep{h82}, initially introduced as associative memories capable of storing and retrieving patterns, have undergone significant advancements in recent years. These developments led to the emergence of Modern Hopfield Networks \citep{rsl+21}.
Modern Hopfield Networks (MHNs) address the limitations of their predecessors, particularly in terms of storage capacity, retrieval speed, and error rates. An important feature of MHNs is their ability to function as specialized components within deep networks, integrating memory capabilities and supporting diverse functionalities. For instance, MHNs can replace conventional layers such as pooling layers, permutation-equivariant layers \citep{gvw+16, rsp16}, GRU \citep{cvg+14}, LSTM \citep{hoc91, hs97}, and attention mechanisms \citep{vsp+17, bcb16}. Their energy-based formulation allows for rapid and efficient pattern retrieval, making them a compelling alternative to traditional memory structures in deep learning.

Understanding Modern Hopfield Networks' computational capabilities and limitations is critical for their effective application. Specifically, it is crucial to explore the computational operations these networks can implement and the complexity of the problems they can solve collectively. These investigations are vital not only for theoretical insights but also for guiding the development of more efficient and powerful models.

While prior research has primarily focused on the dynamics and capacity of Modern Hopfield Networks \citep{hlsl24, whhl24}, their expressiveness from a circuit complexity perspective remains underexplored. This gap raises an important question:
\begin{center}
   {\it What are the fundamental limits of Modern Hopfield Networks in terms of circuit complexity? } 
\end{center}

To address these questions, we turn to the framework of circuit complexity theory, which provides a robust method for analyzing the computational resources required to perform specific tasks. By mapping Modern Hopfield Networks to computational circuits, we can rigorously assess their capabilities and establish upper and lower bounds on the classes of problems they can solve.

In this work, we present a comprehensive theoretical investigation into the circuit complexity bounds of MHNs. Our approach involves a detailed analysis of the MHN architecture and the computational complexity of its constituent components, such as the Hopfield layer. Consequently, we demonstrate that these models can be efficiently simulated by uniform $\mathsf{TC}^0$ circuits.

Our main contributions can be summarized as follows:

\begin{itemize}
    \item We establish that any $\mathrm{poly}(n)$-precision Modern Hopfield Network or Kernelized Hopfield Network with constant depth and $O(n)$ hidden dimension resides within the $\mathsf{DLOGTIME}$-uniform $\mathsf{TC}^0$ complexity class (Theorems~\ref{thm:circuit_complexity_bound_mhn} and~\ref{thm:circuit_complexity_bound_khm}).
    
    \item We demonstrate that, assuming $\mathsf{TC}^0 \neq \mathsf{NC}^1$, $\mathrm{poly}(n)$-precision Modern or Kernelized Hopfield Networks with constant layers and $O(n)$ hidden dimension are fundamentally incapable of solving $\mathsf{NC}^1$-hard problems. We specifically identify undirected graph connectivity and tree isomorphism as problems beyond the reach of these architectures (Theorems~\ref{thm:mhn_ugc}, \ref{thm:mhn_tree_isomorphism}, \ref{thm:khm_ugc} and~\ref{thm:khm_tree_isomorphism}).
    
    \item We prove that these computational limitations are not absolute and can be overcome by the Chain-of-Thought (CoT) mechanism. Specifically, we show that a Modern Hopfield Network with $n$-step CoT can solve the word problem for the permutation group $S_5$ (an $\mathsf{NC}^1$-complete problem), a task that is provably impossible for standard constant-depth MHNs (Theorem~\ref{thm:mhm}).
\end{itemize}

\paragraph{Roadmap.}
In Section~\ref{sec:related_works}, we provide an overview of the related works.
In Section~\ref{sec:preliminary}, we introduce the notations and definitions needed for Modern Hopfield Networks. 
In Section~\ref{sec:complexity_hopfield_networks}, we analyze the circuit complexity for Modern Hopfield Networks. 
In Section~\ref{sec:complexity_k_hopfield_model}, we focus on the circuit complexity results of Kernelized Hopfield Networks. 
In Section~\ref{sec:hardness}, we discuss the hardness results of the Modern Hopfield Networks. 
In Section~\ref{sec:mhm_cot}, we provide our the results for Modern Hopfield Networks with Chain-of-Thought mechanism.
Finally, in Section~\ref{sec:conclusion}, we summarize our theoretical results in the conclusion. 
\section{Related Works}\label{sec:related_works}

In this section, we introduce the related works of the Hopfield models and the complexity of Transformers. 

\paragraph{Theory of Modern Hopfield Networks}
The Hopfield network, introduced by John Hopfield~\cite{h82,h84,kh16,kh18,kh20}, was originally conceived as a model for content-addressable memory. This network operates by encoding patterns through the connections between neurons, enabling associative recall. By fixing specific inputs and allowing the network to evolve dynamically, it minimizes an energy function to retrieve stored patterns. A notable characteristic of Hopfield networks is the overlapping nature of different patterns, where many patterns share common neurons.
Recent advancements in Modern Hopfield Networks have showcased both theoretical significance and practical applications. These models have proven instrumental in offering insights into transformer attention mechanisms and the architecture of Transformers. For instance, \cite{hyw+23,whhl24} introduced a comprehensive framework for analyzing and deriving modern Hopfield models, employing entropic regularizers to unify diverse variants. This framework encompasses sparse and generalized sparse models and includes the standard modern Hopfield network~\cite{rsl+21} as a particular instance.
Moreover, contemporary research efforts~\cite{hyw+23,whl+24,hlsl24, hcl+24,hcw+24,hwl24} continue to explore innovative directions for improving the efficiency and scalability of Hopfield-based models. These studies underline the transformative potential of Hopfield-driven methodologies in shaping future designs, highlighting their adaptability to complex computational tasks and their integration into broader machine learning paradigms.

\paragraph{Modern Hopfield Networks and Applications in Deep Learning.}

Classical Hopfield networks \citep{h82, h84, kh16, kh18, kh20} have long been recognized for their ability to emulate human brain associative memory,  primarily focusing on the storage and retrieval of memory patterns. Originally limited by linear memory capacity \citep{kh16}, these networks now achieve exponential storage capabilities \citep{dhl+17} and kernelized \citep{whhl24}, allowing for efficient handling of vast memory patterns. Architectural innovations \citep{hlp+23, srd+22, frl+22} have integrated Hopfield networks into models like Transformers, serving as advanced attention mechanisms \citep{vsp+17}. Their biological plausibility \citep{kkk23, kh20} has been a key focus, aligning computational models more closely with human associative memory. These developments have expanded applications across various fields, including tabular data learning \cite{xhh+24}, drug discovery \citep{ssf+23}, immunology \citep{wsr+20}, time series forecasting \citep{whhl24}, reinforcement learning \citep{pap+23}, and large foundation models \citep{frl+22}, demonstrating their versatility in addressing complex problems.


\paragraph{Complexity and Neural Network}

Complexity theory offers a formal framework to analyze the computational power and limitations of models. Circuit complexity, a key domain within this field, characterizes the power of Boolean circuits and has recently been applied to investigate the expressive capacity of deep neural networks and Transformers~\citep{mss22, lag+22, ms23_neurips, ms23, llzm24, chi24}. Several important circuit complexity classes play a central role in machine learning. For example, $\mathsf{AC}^0$ includes problems solvable with highly parallelizable logic gates, $\mathsf{TC}^0$ extends this class by incorporating \textit{threshold gates}, and $\mathsf{NC}^1$ describes languages recognized by circuits with $O(\log n)$-depth and bounded fan-in~\citep{mss22}. These classes exhibit the hierarchy $\mathsf{AC}^0 \subset \mathsf{TC}^0 \subseteq \mathsf{NC}^1$, although whether $\mathsf{TC}^0 \neq \mathsf{NC}^1$ remains an open question.

Transformers' depth requirements have been analyzed through the lens of these complexity classes. For instance,~\cite{lag+22} demonstrates that the depth of a Transformer must scale with input length to simulate certain non-solvable semi-automata. In a similar vein,~\cite{llzm24} investigates the connection between constant-depth Transformers, chain-of-thought (CoT) reasoning, and circuit complexity. They show that $
\mathsf{T}[\mathrm{poly}(n), 1, 1] \subseteq \mathsf{CoT}[\log n, \mathrm{poly}(n), 1, 1] \subseteq \mathsf{AC}^0, $
and  
$
\mathsf{T}[\mathrm{poly}(n), \log n, 0] \subseteq \mathsf{CoT}[\log n, \mathrm{poly}(n), \log n, 0] \subseteq \mathsf{TC}^0,
$
Here, $\mathsf{T}[d(n), s(n), e(n)]$ denotes the constant-depth Transformers characterized by precision of $s(n)$, an embedding size of $d(n)$, and $e(n)$ exponent bits, while $\mathsf{CoT}[T(n), d(n), s(n), e(n)]$ refers to a chain-of-thought (CoT) Transformer operating for $T(n)$ steps. These findings highlight that intermediate reasoning steps in CoT models can improve computational expressivity.

The Strong Exponential Time Hypothesis ($\mathsf{SETH}$), which posits that no $k$-$\mathsf{SAT}$ algorithm exists with runtime $O(2^{(1-\epsilon)n})$ for $\epsilon > 0$ and $k \geq 3$, serves as a cornerstone in fine-grained complexity analyses~\citep{ip01}. Results derived from $\mathsf{SETH}$ have shaped Transformer-related studies, particularly for tensor attention mechanisms~\citep{as23, as24_neurips, lssz24_tat, lss+24, as24_iclr24}. For example,~\cite{as23} shows that by assuming $\mathsf{SETH}$, it is impossible to perform forward pass computation of attention-based networks in $O(n^2)$ time, while~\cite{as24_neurips} extends this limitation to backward computations. These results underscore the intrinsic computational challenges of Transformer architectures.

\paragraph{Limitations of Transformer Architectures.}
While Transformers excel in natural language processing, their capabilities in mathematical computation remain constrained~\citep{c22}. To understand these limitations, researchers have examined two types of Transformers: (1) average-head attention, which assigns 1 to the largest probability entry and 0 to others, and (2) softmax-attention, which computes probabilities as $\mathsf{Softmax}(X) = \diag(\exp(X) \cdot {\bf 1}_n)^{-1} \cdot \exp(X)$. Average-head Transformers can recognize languages beyond $\mathsf{AC}^0$, but they are confined to $\mathsf{TC}^0$, as shown by \cite{mss22}. Similarly, softmax-attention Transformers are also within $\mathsf{TC}^0$~\citep{lag+22}. Extending this, \cite{ms23} formalize a generalized similarity function $s$ and demonstrate that softmax-attention Transformers fall under $\mathsf{L}$-uniform $\mathsf{TC}^0$. Through a first-order logic framework with $\mathsf{MAJORITY}$ quantifiers ($\mathsf{FOM}$)~\cite{imm98}, \cite{ms23_neurips} show that these Transformers can be simulated in $\mathsf{DLOGTIME}$-uniform $\mathsf{TC}^0$. \cite{chi24} further enhances precision, proving that Transformers with minimal error ($2^{-O(\mathrm{poly}(n))}$) also belong to this class. For practical problems, \cite{fzg+23} reveal that unless $\mathsf{TC}^0 = \mathsf{NC}^1$, log-precision Transformers cannot solve arithmetic, equation-solving, or Context-Free Grammar (CFG) membership tasks~\citep{sip96}. These findings explain why Transformers often struggle with mathematical reasoning. Recently, \cite{cll+24} studied the circuit complexity bounds of RoPE-based transformer architectures.
\section{Preliminaries}\label{sec:preliminary}
In Section~\ref{sub:notations}, we define useful notations for defining Modern Hopfield Networks. 
In Section~\ref{sub:float_point_numbers}, we introduce the definition of floating-point numbers.
In Section~\ref{sub:hopfield_network}, we introduce some components of Modern Hopfield Networks. In Section~\ref{sec:kernelized_hopfield_networks}, we state the basic definitions and architectures of kernelized Hopfiled networks. 
The background knowledge about circuit complexity is introduced in Section~\ref{sec:circuit_complexity}.

\subsection{Notations}\label{sub:notations}

We define $\mathbb{N} := \{0, 1, 2, \ldots\}$ as the set of natural numbers and $\mathbb{R}$ as the set of real numbers. For any positive integer $n$, $[n]$ represents the set $\{1, 2, \ldots, n\}$. The vector $\mathbf{1}_n$ denotes an $n$-dimensional vector where all entries are ones. Given a matrix $A \in \mathbb{R}^{m \times n}$, $A_{i,j}$ refers to the element in the $i$-th row and $j$-th column, while $A^\top$ represents the transpose of $A$. The set $\{0,1\}^*$ represents all binary strings of finite length. Specifically, for $x_i \in \{0,1\}^*$, $x_i$ represents a finite binary string, with each bit being either $0$ or $1$. A language $L$ is characterized as a subset of $\{0, 1\}^*$. 

\subsection{Floating-Point Numbers}\label{sub:float_point_numbers}
This section presents key definitions necessary for our computational framework. We introduce basic concepts related to floating-point numbers and the definition of the operations, which are fundamental to implementing efficient Hopfield network computations.

\begin{definition}[Floating-point number, Definition 9 in \cite{chi24}]\label{def:floating_point_number}
A p-bit floating-point number is represented as a pair $\langle m,e \rangle$, where $m$ is the significand and $e$ is the exponent. Specifically, $m \in (-2^p, -2^{p-1}] \cup \{0\} \cup [2^{p-1},2^p)$ and $e \in [-2^p, 2^p)$. The floating-point value of $\langle m,e \rangle$ is the real number $m \cdot 2^e$. The set of all p-bit floating-point numbers is denoted as $\mathbb{F}_p$.
\end{definition}

To work with floating-point numbers effectively, we define specific rules for rounding and performing arithmetic operations:

\begin{definition}[Rounding, Definition 9 in \cite{chi24}]\label{def:rounding}
For any real number or floating-point value $x$, we define $\mathrm{round}_p(x)$ as the p-bit floating-point number with an even significand closest to $x$.
\end{definition}

Based on these definitions, we outline the main arithmetic operations for floating-point numbers required for Hopfield networks:

These operations are not merely theoretical, they can be practically implemented in hardware, as demonstrated by the following lemma:

\begin{lemma}[Basic floating-point operations in $\mathsf{TC}^0$, Lemmas 10, 11 in \cite{chi24}]\label{lem:operations_tc0}
If the precision $p \le \poly(n)$, then the following results are valid:
    
    {\bf Part 1.} Let $x_1, x_2$ be two $p$-bits floating point numbers. The arithmetic operations of addition, multiplication, division, and comparison of $x_1$ and $x_2$, as described in~\cite{chi24}, can be executed using a $\mathsf{DLOGTIME}$-uniform threshold circuit characterized by constant depth, and a size bounded by $\poly(n)$. The maximum depth of the circuit required is denoted by $d_\mathrm{std}$.
    
    {\bf Part 2.} Let $x_1, \ldots, x_n$ be $n$ p-bit floating-point number. Iterative multiplication of $x_1, \ldots, x_n$ can be executed with a $\mathsf{DLOGTIME}$-uniform threshold circuit characterized by constant depth and a size bounded by $\poly(n)$. The circuit depth for this operation is indicated as $d_\otimes$.
    
    {\bf Part 3.} Let $x_1, \ldots, x_n$ be $n$ p-bit floating-point number. Iterative addition of $x_1, \ldots, x_n$, where rounding is applied after the summation is completed, can be achieved using a $\mathsf{DLOGTIME}$-uniform threshold circuit characterized by constant depth, and a size bounded by $\poly(n)$. The depth required for this operation is represented by $d_\oplus$.
\end{lemma}

This lemma shows that basic arithmetic operations can be efficiently executed in $\mathsf{TC}^0$ when operating on floating-point numbers with precision $p \le \poly(n)$.

\begin{lemma}[Computing $\exp$ in $\mathsf{TC}^0$, Lemma 12 in \cite{chi24}]\label{lem:exp_tc0}
Assuming that precision $p \le \poly(n)$. For any p-bit floating-point number $x$, there exists a threshold circuit characterized by uniformity, constant depth, and a size bounded by $\poly(n)$ that approximates $\exp(x)$, which has a relative error at most $2^{-p}$. The depth of the circuit required for this computation is denoted by $d_\mathrm{exp}$.
\end{lemma}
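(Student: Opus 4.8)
The plan is to compute $\exp(x)$ by combining a \emph{range reduction} step with a \emph{truncated Taylor expansion}, so that the entire computation reduces to a constant number of applications of the iterated-addition and iterated-multiplication primitives of Lemma~\ref{lem:operations_tc0}, each of which already lies in $\mathsf{DLOGTIME}$-uniform $\mathsf{TC}^0$. The relative-error target $2^{-p}$ will be met by carrying intermediate arithmetic at a mildly inflated precision and rounding the final answer back to $p$ bits.

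First I would bring the argument into a bounded interval. Setting $k := \mathrm{round}_p(x/\ln 2)$ and $r := x - k\ln 2$, we obtain $\exp(x) = 2^{k}\exp(r)$ with $|r| \le (\ln 2)/2 < 1$. Computing $k$ is one division followed by a rounding, forming $k\ln 2$ is one multiplication by the hardwired constant $\ln 2$, and producing $r$ is one subtraction; by Part~1 of Lemma~\ref{lem:operations_tc0} each of these is a constant-depth, $\poly(n)$-size threshold circuit. The final multiplication by $2^{k}$ is merely an additive adjustment of the floating-point exponent, again constant depth. Boundary cases (e.g.\ $x=0$, or $|x|$ so large that $\exp(x)$ over- or underflows the representable range) are dispatched by a constant number of comparison gates and do not affect the main argument; note that $\exp(x) > 0$ always, so the relative-error statement is well posed.

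Second I would approximate $\exp(r)$ on the bounded interval by its truncated Taylor series $S_T(r) := \sum_{i=0}^{T} r^i/i!$. Since $|r| < 1$, the truncation tail is at most $|r|^{T+1}/(T+1)!$, and because $(T+1)!$ grows super-exponentially it suffices to take $T = O(p) \le \poly(n)$ terms to push the truncation relative error below $2^{-p}$. To realize $S_T(r)$ in $\mathsf{TC}^0$ I would compute the powers $r^0,\dots,r^T$ in parallel, each as an iterated product of at most $T$ copies of $r$ (Part~2 of Lemma~\ref{lem:operations_tc0}, depth $d_\otimes$); supply the reciprocals $1/i!$ as fixed rational constants from the uniformity machine (equivalently, form each $i!$ as an iterated product of integers $\le T$); compute each term $t_i := r^i/i!$ by one division (Part~1); and finally take $S_T(r) = \bigoplus_{i=0}^{T} t_i$ as a single iterated addition (Part~3, depth $d_\oplus$). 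The circuit pattern is completely regular in the index $i$, so $\mathsf{DLOGTIME}$-uniformity is immediate.

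The main obstacle is the \emph{error analysis} rather than the circuit layout: I must control the accumulation of rounding error across the $T = \Theta(p)$ arithmetic operations while keeping the depth constant. Each floating-point operation contributes relative error $\le 2^{-p}$, so a naive bound gives $\Theta(p\cdot 2^{-p})$, which exceeds the target. The standard remedy is to perform all intermediate arithmetic at inflated precision $p' = p + O(\log p)$ — still $\poly(n)$, hence still within the scope of Lemma~\ref{lem:operations_tc0} — and to round the final result to $p$ bits; a term-by-term bound on $|S_T(r) - \widehat{S_T(r)}|$ then yields combined truncation-plus-rounding relative error at most $2^{-p}$. A secondary point to verify is that the hardwired constants ($\ln 2$ and the $1/i!$) can be produced by the logtime uniformity machine to $p'$ bits, which is routine since their bit sequences are efficiently computable. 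Summing the constantly many depth contributions $d_\mathrm{std}$, $d_\otimes$, and $d_\oplus$ yields the total depth $d_\mathrm{exp}$, completing the argument.
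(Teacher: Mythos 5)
The paper does not actually prove this lemma: it is imported verbatim as Lemma 12 of \cite{chi24}, and no argument for it appears anywhere in the text. Your reconstruction --- range reduction via $x = k\ln 2 + r$ with $|r| \le (\ln 2)/2$, a truncated Taylor series with $T = O(p)$ terms realized through parallel iterated products, hardwired reciprocal factorials, and one iterated addition, with rounding error absorbed by inflated intermediate precision --- is the standard proof of this fact and is essentially the technique used in the cited source, so there is no genuine gap. One quantitative nit: for arguments near the top of the representable range ($|x|$ up to roughly $2^p$ while $\exp(x)$ still fits), the subtraction $r = x - k\ln 2$ suffers cancellation of up to $p$ leading bits, so $\ln 2$ and the reduction arithmetic must be carried to about $2p + O(\log p)$ bits rather than the $p + O(\log p)$ you state; since this is still $\poly(n)$, it changes nothing structurally, and your handling of the alternating-series case ($r<0$) is implicitly safe because $\exp(r) = \Omega(1)$ on the reduced interval, keeping the summation's condition number bounded by a constant.
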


\begin{lemma}[Approximating square root in $\mathsf{TC}^0$, Lemma 12 in \cite{chi24}]\label{lem:square_root_tc0}
Assuming that precision  $p \le \poly(n)$. There exists a threshold circuit characterized by uniformity, constant depth, and a size bounded by $\poly(n)$ that computes $\sqrt{x}$ for any p-bit floating-point number $x$, which has a relative error at most $2^{-p}$. The circuit depth required for computing $\sqrt{x}$ is denoted by $d_\mathrm{sqrt}$.
\end{lemma}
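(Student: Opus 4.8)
The plan is to reduce the computation of $\sqrt{x}$ to two pieces: (i) an arithmetic manipulation of the exponent, and (ii) the evaluation of the square root of a normalized significand lying in a fixed bounded interval. I would then approximate the latter by a single fixed polynomial of degree $O(p)$ and evaluate it in constant depth using the iterated arithmetic primitives of Lemma~\ref{lem:operations_tc0}. The reason to favor polynomial evaluation over an iterative scheme is that iterated products and iterated sums of $\poly(n)$ operands already lie in $\mathsf{TC}^0$, so a fixed-degree polynomial can be evaluated in constant depth.

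First I would write $x = m \cdot 2^e$ as in Definition~\ref{def:floating_point_number} and split on the parity of the exponent $e$. If $e$ is even, then $\sqrt{x} = \sqrt{m}\cdot 2^{e/2}$; if $e$ is odd, I rewrite $x = (2m)\cdot 2^{e-1}$ so that the new exponent is even. Extracting the parity of $e$ and forming $e/2$ or $(e-1)/2$ are operations on the $O(p)$-bit integer $e$, which lie in $\mathsf{TC}^0$ (indeed $\mathsf{AC}^0$) by Part 1 of Lemma~\ref{lem:operations_tc0}. After this step the task is reduced to approximating $\sqrt{s}$ for a significand $s$ normalized to a fixed interval such as $s\in[1,4)$, for which $\sqrt{s}\in[1,2)$ is bounded away from the singularity of the square root at the origin.

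Next I would approximate $\sqrt{s}$ on $[1,4)$ by a polynomial $P(s) = \sum_{k=0}^{D} c_k s^k$ of degree $D = O(p)$. Because the square root is analytic on a compact interval bounded away from $0$, its Chebyshev expansion converges geometrically, so a degree $D = O(p)$ truncation achieves absolute error at most $2^{-p-2}$; since $\sqrt{s}\ge 1$ on this interval this is also a relative error of at most $2^{-p}$, and the subsequent multiplication by $2^{e/2}$ preserves the relative error. The coefficients $c_k$ are fixed rationals, each roundable to a $\poly(p)$-bit floating-point constant that a $\mathsf{DLOGTIME}$ machine can emit from the index $k$. To evaluate $P(s)$, I would compute the powers $s^0, s^1, \ldots, s^D$, noting that each $s^k$ is an iterated product of $k\le D$ copies of $s$ and hence is computable in depth $d_\otimes$ by Part 2 of Lemma~\ref{lem:operations_tc0}; multiplying each by $c_k$ costs depth $d_\mathrm{std}$ (Part 1), and the final sum $\sum_k c_k s^k$ is an iterated addition of $D+1 \le \poly(n)$ terms, computable in depth $d_\oplus$ by Part 3. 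Re-attaching the exponent $2^{e/2}$ and rounding to $p$ bits (Definition~\ref{def:rounding}) are constant-depth floating-point operations, so the overall circuit has constant depth $d_\mathrm{sqrt}$ and $\poly(n)$ size, and its wiring is $\mathsf{DLOGTIME}$-uniform.

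The main obstacle I anticipate is guaranteeing constant depth together with the $2^{-p}$ relative-error bound. In particular, the natural Newton iteration $y_{k+1} = \frac{1}{2}(y_k + s/y_k)$ only doubles the number of correct bits per step and therefore needs $\Theta(\log p)$ iterations, yielding depth $\Theta(\log n)$ rather than constant depth; the polynomial-evaluation route above avoids this precisely because iterated products and sums of $\poly(n)$ operands are already in $\mathsf{TC}^0$ by Lemma~\ref{lem:operations_tc0}. The remaining care is quantitative: I must certify that degree $O(p)$ suffices for relative error $2^{-p}$ after range reduction, and that carrying $\poly(p)$ internal bits through the powers $s^k$ does not let rounding errors accumulate beyond $2^{-p}$. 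Both follow from the geometric convergence of the Chebyshev expansion on an interval bounded away from the origin together with standard floating-point error accounting.
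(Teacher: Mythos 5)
You should first note what you are being compared against: the paper does not prove this lemma at all --- it imports it wholesale as Lemma 12 of \cite{chi24}, so there is no in-paper argument to match step by step. Your reconstruction is, in outline, the standard proof technique used in that line of work (Reif--Tate style and \cite{chi24} itself): range-reduce by splitting the exponent on parity, so that only $\sqrt{s}$ for a significand $s$ in a fixed interval bounded away from $0$ remains; approximate that by a single polynomial of degree $O(p)$; and evaluate the polynomial in constant depth because iterated product (Part 2) and iterated addition (Part 3) of Lemma~\ref{lem:operations_tc0} are single constant-depth primitives, not $\Theta(\log p)$-round loops. Your quantitative claims check out: on $[1,4]$ the Chebyshev coefficients of $\sqrt{s}$ decay geometrically (the Bernstein ellipse through the branch point $s=0$ has parameter $\rho = 3$), so degree $O(p)$ gives absolute error $2^{-p-O(1)}$, which is a relative error since $\sqrt{s}\ge 1$; and multiplication by $2^{e/2}$ is exact, so only the final rounding must be budgeted, which your $2^{-p-2}$ slack covers. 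Your diagnosis of why Newton iteration fails (it needs $\Theta(\log p)$ sequential stages) is also exactly the right motivation for the power-series route.

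The one genuine soft spot is your treatment of the coefficients $c_k$. You assert that each $c_k$ is ``roundable to a $\poly(p)$-bit floating-point constant that a $\mathsf{DLOGTIME}$ machine can emit from the index $k$.'' But $\mathsf{DLOGTIME}$-uniformity requires the direct-connection queries about the circuit to be answerable in $O(\log n)$ time, and producing the $j$-th bit of a $p$-bit Chebyshev coefficient of $\sqrt{\cdot}$ within that budget is not obvious and is not something Lemma~\ref{lem:operations_tc0} gives you; as written this step is unjustified. The standard repair is to avoid hardwiring transcendental-looking constants: have the \emph{circuit} compute the coefficients from simple integers. For example, after a constant-size table-based reduction writing $s = r^2(1+u)$ with $r$ a dyadic rational chosen from the top $O(1)$ bits of $s$ (so the finite table is trivially uniform and $|u|\le 1/2$), use the binomial series $\sqrt{1+u} = \sum_k \binom{1/2}{k} u^k$, whose coefficients are explicit ratios of factorial-type integer products: these are computable inside the circuit by the very iterated-multiplication and division primitives already invoked, with uniformity inherited from them. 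With that substitution (geometric convergence at rate $|u|^k \le 2^{-k}$, so still degree $O(p)$), your argument goes through; the rounding-error accounting you defer to ``standard analysis'' is then routine, since each $s^k$ (or $u^k$) incurs one rounding via Part 2 and the sum incurs one via Part 3.
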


\subsection{Modern Hopfield Network}\label{sub:hopfield_network}
With the above notations established, we proceed to introduce the foundational concepts of Modern Hopfield Networks.

The input query pattern is denoted as $x \in \mathbb{R}^d$, while memory patterns are represented by the matrix $\Xi = [\xi_1, \ldots, \xi_M] \in \mathbb{R}^{d \times M}$. Hopfield models are associative memory models based on energy functions, where the stored memory patterns $\Xi$ correspond to the local minima of these energy landscapes. For a given input query $x$, the model retrieves the most similar memory pattern by employing energy minimization algorithms, referred to as retrieval dynamics $\mathcal{T}$, initialized at $x$.

In \cite{rsl+21}, the Modern Hopfield Model is introduced with a specific energy function $E$ and retrieval dynamics $\mathcal{T}$, which are further incorporated into deep learning frameworks due to their relation to transformer attention mechanisms \cite{vsp+17}. This approach enhances performance and provides a theoretical guarantee of exponential memory capacity. The energy function is defined as:
$
    E(x) = -\mathrm{lse}(\beta, \Xi^\top x) + \frac{1}{2}\langle x, x \rangle,
$
where the retrieval dynamics is given by
\begin{align}\label{eq:retrieval_dynamics}
x^{\mathrm{new}} = \mathcal{T}_\mathrm{Dense}(x) = \Xi \cdot \mathrm{Softmax}(\beta\Xi^\top x).
\end{align}
and the function $\mathrm{lse}(\beta, z)$ is the log-sum-exponential, defined as $\mathrm{lse}(\beta, z) := \log (\sum_{\mu=1}^M \exp \{\beta z_\mu\})\ / \beta$ for any $z \in \mathbb{R}^M$ and $\beta > 0$. When applied to a sequence of input queries $X = [x_1, \ldots, x_L] \in \mathbb{R}^{d\times L}$, Eq.~\eqref{eq:retrieval_dynamics} becomes: $Z:= [x_1^\mathrm{new}, \ldots, x_L^\mathrm{new}] = \mathcal{T}_\mathrm{Dense}(X)$, and hence
\begin{align*}
\mathcal{T}_{\mathrm{Dense}}(X)
= \overbrace{\Xi}^{d\times M} \cdot \mathrm{Softmax}(\beta\overbrace{\underbrace{\Xi^\top}_{M\times d} \underbrace{X}_{d\times L}}^{M\times L})\in\R^{d\times L},
\end{align*}
where $\mathrm{Softmax}(\cdot)$ performs column-wise normalization. We assume that $d = L^{o(1)}$, meaning the growth rate of $d$ is sub-polynomial relative to $L$.

Modern Hopfield Networks with continuous states are compatible with deep learning models due to their differentiable nature and the ability to retrieve patterns in a single update step. This aligns with the behavior of deep learning layers, which typically activate only once. These properties make Modern Hopfield Networks suitable as specialized layers to add memory functionalities to deep networks.

\begin{definition}[Hopfield attention matrix, \cite{rsl+21}]\label{def:hopfield_attention_matrix}
The Hopfield attention matrix $A \in \mathbb{F}_p^{n \times n}$
is defined using model weights $W_Q, W_K \in \mathbb{F}_p^{d \times d}$, query patterns $R \in \mathbb{F}_p^{n \times d}$, and key patterns $Y \in \mathbb{F}_p^{n \times d}$. For $i, j \in [n]$, the elements of $A$ are given by:
\begin{align*}
A_{i,j} :=  \mathrm{\exp}(\beta \cdot R_{i,*}W_QW_K^\top Y_{j,*}^\top)
\end{align*}
\end{definition}
$\mathbb{F}_p$ is the set of all $p$-bit floating-point numbers. The floating-point number set $\mathbb{F}_p$, and their operations in our computational framework are formally defined in Section~\ref{sub:float_point_numbers}. The Hopfield attention matrix is used to compute a single Hopfield layer.

\begin{definition}[Hopfield layer, page 6 in 
\cite{rsl+21}]\label{def:single_hopfield_layer}
A single Hopfield layer propagates patterns using query patterns $R$ and key patterns $Y$. 
In its most general form, the result patterns $Z \in \mathbb{F}_p^{n \times d}$ are a function of raw stored patterns $Y \in \mathbb{F}_p^{n \times d}$, raw state patterns $R \in \mathbb{F}_p^{n \times d}$, and projection matrices $W_Q, W_K, W_V \in \mathbb{F}_p^{d \times d}$:
\begin{align}\label{eq:single_hopfield_layer}
Z =  \mathrm{softmax}(\beta \cdot R W_Q W_K^\top Y^\top) Y W_K W_V  
\end{align}
We set $D := \diag(\beta \cdot A \mathbf 1_n) \in \mathbb{F}_p^{n \times n}$. Then, based on Definition~\ref{def:hopfield_attention_matrix}, we define the $i$-th Hopfield layer  as
\begin{align*}
\mathsf{Hop}_i(R,Y_i) := D^{-1}AY_i\tilde{W}_V,
\end{align*}
where we denote
\begin{align}\label{eq:tilde_w_v}
\tilde{W}_V = W_KW_V.
\end{align}
Here, the rank of $\tilde{W}_V$ is limited by dimension constraints of the matrix product $W_KW_V$. To provide the Hopfield layer with more flexibility, the matrix product $W_KW_V$ can be replaced by one parameter matrix. In this case, $\tilde{W}_V$ is not the product from Eq.~\eqref{eq:tilde_w_v} but a stand-alone parameter matrix as in the original transformer setting.
\end{definition}

Multiple Hopfield layers can be integrated with other elements to build a functional Hopfield architecture.

\begin{definition}[Multi-layer Modern Hopfield Networks]\label{def:modern_hopfield_networks} Consider a model consisting of $m$ Hopfield layers. For each $i \in [m]$, let $f_i$ denote the components of the network excluding the $i$-th Hopfield layer, where $f_i: \mathbb{F}_p^{n\times d} \rightarrow \mathbb{F}_p^{n\times d}$. Denote the $i$-th Hopfield layer as $\mathsf{Hop}_i$. Let the input matrix or query patterns be denoted by $R \in \mathbb{F}_p^{n\times d}$, and the stored patterns (keys) associated with the $i$-th Hopfield layer by $Y_i \in \mathbb{F}_p^{n\times d}$. The multi-layer Modern Hopfield Network, denoted as $\mathsf{MHN}: \mathbb{F}_p^{n\times d} \rightarrow \mathbb{F}_p^{n\times d}$, is defined as:
\begin{align*}
\mathsf{MHN}(R) = & ~ f_m \circ \mathsf{Hop}_m(f_{m-1} \circ \mathsf{Hop}_{m-1} (\cdots f_1 \circ \mathsf{Hop}_1(f_0(R), Y_1) \cdots, Y_{m-1}), Y_m),
\end{align*}
where $\circ$ denotes the composition of functions.
\end{definition}

We now present one type of $f_i$ function, specifically a two-layer ReLU feedforward neural network.

\begin{definition}[Two-layer ReLU Feed-forward Neural Networks]\label{def:feed_forward_nns}
We use $X \in \mathbb{F}_p^{n\times d}$ to denote the input matrix of size $n \times d$. For each $i \in [n]$, the two-layer ReLU Feed-forward Neural Networks is defined as follows:
\begin{align*}
g^\mathrm{FNN}(X)_{i,*} := \underbrace{W_2}_{d \times d} \cdot \relu(\underbrace{W_1}_{d \times d} \cdot \underbrace{X_{i,*}}_{d\times 1} + \underbrace{b_1}_{d \times 1})+ \underbrace{b_2}_{d \times 1}.
\end{align*}
\end{definition}

\subsection{Kernelized Hopfield Networks}\label{sec:kernelized_hopfield_networks}

The capacity of Modern Hopfield Networks is suboptimal. To improve the capacity, \cite{whhl24} introduces a kernel as a learnable similarity measure, using stored memory patterns as training data to enhance memory capacity. Specifically, they propose the kernelized Hopfield network (KHM) defined by following update rule and energy function:
\begin{align*}
\mathcal{T}_\Phi(x) := & ~ \Xi \cdot \mathrm{Softmax}(\beta \mathcal{K}(\Xi, x)) \\
\quad E_\mathcal{K}(x) := & ~ \frac{1}{2} \mathcal{K}(x, x) + \mathrm{lse}(\beta, \mathcal{K}(\Xi, x)),
\end{align*}
where the kernel $\mathcal{K}(\cdot,\cdot):= \langle \Phi(\cdot),\Phi(\cdot)\rangle:\R^d \times \R^d \rightarrow \R$ is associated with a learnable feature map $\Phi: \R^d \rightarrow \R^{D_\Phi}$. Here, $\mathcal{K}(\cdot,\cdot)$ acts column-wise on matrix: $\mathcal{K}(\Xi,x)=[\{\mathcal{K}(\xi_\mu,x)\}_{\mu=1}^M]= [\{ \langle \Phi(\xi_\mu), \Phi(x) \rangle \}_{\mu=1}^M] \in \R^M$. Accordingly, we define the kernelized Hopfield layer.

\begin{definition}[Kernelized attention matrix]\label{def:k_attention_matrix}
Let $R \in \mathbb{F}_p^{n\times d}$ be the set of query (state) patterns, and $Y \in \mathbb{F}_p^{n\times d}$ be the set of key (stored) patterns. Let $W_Q \in \mathbb{F}_p^{d \times D_\Phi}, W_K  \in \mathbb{F}_p^{d \times D_\Phi}$, and $W_V \in \mathbb{F}_p^{d\times d}$ be learnable projection matrices. Consider a feature map $\Phi: \mathbb{F}_p^{D_\Phi} \rightarrow \mathbb{F}_p^{D_\Phi}$ associated with a kernel function $\mathcal{K}: \mathbb{F}_p^{D_\Phi} \times \mathbb{F}_p^{D_\Phi} \rightarrow \mathbb{F}_p$ defined by $\mathcal{K}(\cdot,\cdot):= \langle \Phi(\cdot),\Phi(\cdot)\rangle$. Let $\beta > 0$ be a scaling parameter.
According to Definition~\ref{def:hopfield_attention_matrix}, the kernelized Hopfield attention matrix $A \in \mathbb{F}_p^{n\times n}$ is defined by, for every $i,j \in [n]$,
$
A_{i,j} := \exp (\beta \cdot \langle \Phi(R_iW_Q),\Phi (Y_jW_K)\rangle) 
$

\end{definition}

The kernelized Hopfield attention matrix is the basis for computing a single kernelized Hopfield layer.

\begin{definition}[Single kernelized Hopfield layer]\label{def:single_k_hopfield_layer}
The result pattern $Z \in \mathbb{F}_p^{n \times d}$ are a function of raw stored patterns $Y \in \mathbb{F}_p^{n \times d}$, raw state pattern $R \in \mathbb{F}_p^{n \times d}$, and projection matrices $W_Q,W_K,W_V \in \mathbb{F}_p^{d \times d}$(For simplicity, we denote $\tilde{W}_V$ in Hopfield layer as $W_V$):
\begin{align*}
Z = \mathsf{softmax}(\beta \cdot \langle \Phi(RW_Q),\Phi (YW_K)\rangle) YW_V
\end{align*}
Note that the softmax is applied row-wise, $\langle \Phi(RW_Q),\Phi (YW_K)\rangle_{i,j} = \langle \Phi(R_iW_Q),\Phi (Y_jW_K)\rangle $. We set $D := \diag(\beta \cdot A 1_n) \in \mathbb{F}_p^{n \times n}$. Then, based on Definition~\ref{def:hopfield_attention_matrix}, we define the i-th kernelized Hopfield layer  as
$
\mathsf{KHop}_i(R,Y_i) := D^{-1}AY_iW_V.
$
\end{definition}

Multiple kernelized Hopfield layers can be integrated with additional components to construct the kernelized Hopfield network.

\begin{definition}[Kernelized Hopfield network]\label{def:kernelized_hopfield_model}
We use $m$ to denote the number of kernelized Hopfield layers in the network. For $i \in \{0, 1, 2, \ldots, m\}$, we use $f_i$ to denote the other components of $i$-th kernelized Hopfield layer, where $f_i: \mathbb{F}_p^{n \times d} \rightarrow \mathbb{F}_p^{n \times d}$. Let $\mathsf{KHop}_i$ denote the $i$-th kernelized Hopfield layer. Define $R \in \mathbb{F}_p^{n \times d}$ as the state (query) patterns or input matrix, and $Y_i \in \mathbb{F}_p^{n \times d}$ as the stored (key) patterns in the $i$-th kernelized Hopfield layer. The $m$-layer kernelized Hopfield network $\mathsf{KHN}: \mathbb{F}_p^{n \times d} \rightarrow \mathbb{F}_p^{n \times d}$ is defined as:
\begin{align*}
\mathsf{KHN}(R) = & ~ f_m \circ \mathsf{KHop}_m(f_{m-1} \circ \mathsf{KHop}_{m-1} (\cdots f_1 \circ \mathsf{KHop}_1(f_0(X), Y_1) \cdots, Y_{m-1}), Y_m),
\end{align*}
where $\circ$ denotes the composition of functions.
\end{definition}
\section{Complexity of Modern Hopfield Networks}\label{sec:complexity_hopfield_networks}

In Section~\ref{sec:circuit_complexity}, we provide some fundamental definitions from circuit complexity theory.
This section explores key results regarding the circuit complexity of the computations involved in Modern Hopfield Networks. 
Then, we begin with an analysis of the Hopfield attention matrix in Section~\ref{sub:computing_hopfield_attention_matrix}.
In Section~\ref{sub:computing_hopfield_layer}, we delve into the computation of a single Hopfield layer. Section~\ref{sub:computing_common_blocks} extends the discussion to other components beyond the Hopfield layer. 
In Section~\ref{sub:computing_mhn}, we examine the modern Hopfield network. 
Finally, Section~\ref{sub:main_result_mhn} presents our main result: the circuit complexity bounds for Modern Hopfield Networks. These findings form the basis for our main theorem on the expressive power of Hopfield networks. 

\subsection{Background on Circuit Complexity}\label{sec:circuit_complexity}

\begin{definition}[Boolean circuit]\label{def:boolean_circuit}
We define a Boolean circuit that has $n$ inputs and $m$ outputs as a function $C_n:\{0,1\}^n \rightarrow \{0,1\}^m$ on a directed acyclic graph. Each node in the graph represents a logic gate in $\{\mathsf{AND}, \mathsf{OR}, \mathsf{NOT}\}$. The graph contains $n$ input nodes with in-degree 0 and $m$ output nodes with out-degree 0. The circuit evaluates the value at each non-input gate based on the inputs it receives from other gates. The size of the circuit is the total number of gates, and its depth is the maximum length of a directed path.
\end{definition}

To address varying input sizes, the concept of circuit families is introduced.

\begin{definition}[Circuit family and language recognition, Definition 1.10 on page 10 in \cite{vol99}]
A circuit family is a set $\mathcal C = \{C_n : n \in \mathbb N\}$, where for each $n \in \mathbb N$, $C_n$ is a circuit with $n$ inputs. The family $\mathcal C$ computes a function $f:\{0,1\}^* \to \{0,1\}^*$ if for every $x \in \{0,1\}^*$, $C_{|x|}(x) = f(x)$. Similarly, $\mathcal C$ is said to compute a language $L \subseteq \{0,1\}^*$ if $\mathcal C$ computes the indicator function $\mathbbm 1_L$ of $L$.
\end{definition}

A language $L \subseteq \{0,1\}^*$ is a set of binary strings that represent the ``yes'' instances of a given decision problem, where $\{0,1\}^*$ denotes the set of all finite binary strings of any length.

The complexity classes $\mathsf{NC}^i$, $\mathsf{AC}^i$, and $\mathsf{TC}^i$ characterize languages recognizable by Boolean circuits under different constraints. $\mathsf{NC}^i$ consists of languages that can be computed using circuits with polynomial size, depth $O((\log n)^i)$, and bounded fan-in $\mathsf{AND}$, $\mathsf{OR}$, and $\mathsf{NOT}$ gates. Extending this, $\mathsf{AC}^i$ allows unbounded fan-in for $\mathsf{AND}$ and $\mathsf{OR}$ gates while maintaining the same size and depth constraints, enabling the recognition of a broader range of languages. Further generalizing, $\mathsf{TC}^i$ incorporates $\mathsf{MAJORITY}$ gates. These $\mathsf{TC}^i$ circuits also support unbounded fan-in $\mathsf{AND}$, $\mathsf{OR}$, and $\mathsf{NOT}$ gates while adhering to the same size and depth limits, making them more expressive than both $\mathsf{NC}^i$ and $\mathsf{AC}^i$.

The class $\mathsf{P}$ encompasses languages decidable by deterministic Turing machines in polynomial time. Circuit hierarchies relate $\mathsf{NC}^i$, $\mathsf{AC}^i$, and $\mathsf{TC}^i$, showing that for any fixed $i\in [n]$, \begin{align*}
    \mathsf{NC}^i \subseteq \mathsf{AC}^i \subseteq \mathsf{TC}^i \subseteq \mathsf{NC}^{i+1} \subseteq \mathsf{P}.
\end{align*}
However, whether $\mathsf{TC}^0$ is strictly contained in $\mathsf{NC}^1$ remains unresolved.

Uniform circuit families are more practical and relevant in computational theory compared to non-uniform families. For $\mathsf{L}$-uniformity, a Turing machine with logarithmic space must output a circuit for any given input size. Alternatively, $\mathsf{DLOGTIME}$-uniformity requires a random access Turing machine to construct circuits in logarithmic time. These definitions ensure feasibility and consistency in circuit design. While $\mathsf{DLOGTIME}$-uniformity is generally equivalent to $\mathsf{L}$-uniformity, differences arise in smaller circuit classes that lack the ability to simulate their own generation process.

\subsection{Computing Hopfield Attention Matrix}\label{sub:computing_hopfield_attention_matrix}

We first recall that matrix multiplication of two matrices is in $\mathsf{TC}^0$.

\begin{lemma}[Matrix multiplication in $\mathsf{TC}^0$, Lemma 4.2 in \cite{cll+24}]\label{lem:matrix_multiplication_tc0}
 Let $p \leq \poly(n)$, $n_1, n_2 \leq \poly(n)$, and $d \leq n$. Let $A \in \mathbb{F}_p^{n_1 \times d}$ and $B \in \mathbb{F}_p^{d \times n_2}$. Then the matrix product $AB$ can be implemented using a $\mathsf{DLOGTIME}$-uniform threshold circuit which has depth $(d_\mathrm{std} + d_\oplus)$ and size bounded by $\poly(n)$.
\end{lemma}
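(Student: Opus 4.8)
The plan is to reduce the matrix product to its scalar constituents and then compose two previously established $\mathsf{TC}^0$ building blocks: one for the floating-point product of two numbers (of depth $d_\mathrm{std}$) and one for the iterated floating-point sum (of depth $d_\oplus$). Recall that the $(i,k)$ entry of the product is the inner product $(AB)_{i,k} = \sum_{j=1}^d A_{i,j} B_{j,k}$, so computing $AB$ amounts to forming $n_1 \cdot d \cdot n_2$ pairwise products and then, for each of the $n_1 \cdot n_2$ output positions, summing $d$ of these products. Since $n_1, n_2 \le \poly(n)$ and $d \le n$, the total number of scalar operations is bounded by $\poly(n)$, which is what we need for polynomial size.

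First I would, in a single parallel layer, compute every product $A_{i,j} B_{j,k}$ for $(i,j,k) \in [n_1]\times[d]\times[n_2]$. Each product is a single floating-point multiplication, which by the scalar-multiplication primitive is implementable by a $\mathsf{DLOGTIME}$-uniform threshold circuit of depth $d_\mathrm{std}$ and $\poly(n)$ size; placing $\poly(n)$ independent copies side by side keeps the depth at $d_\mathrm{std}$ and the size at $\poly(n)$. Next, for each fixed $(i,k)$ I would route the $d$ relevant products $\{A_{i,j}B_{j,k}\}_{j\in[d]}$ into an iterated-addition subcircuit, which by the iterated-sum primitive has depth $d_\oplus$ and $\poly(n)$ size. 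Stacking the multiplication layer below the addition layer yields a circuit of total depth $d_\mathrm{std} + d_\oplus$ and polynomial size. Finally I would verify $\mathsf{DLOGTIME}$-uniformity: the wiring pattern is completely regular, since each output node is connected to a fixed block of product gates indexed by an arithmetic progression in the flat indexing of the inputs, so a deterministic machine can decode the indices $(i,j,k)$ from a gate number and check adjacency in logarithmic time; composing two uniform families by stacking preserves uniformity.

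The part that needs the most care is not the structural composition but the scalar primitives themselves, since we operate over floating-point numbers rather than integers. Floating-point multiplication and, more delicately, the iterated summation of $d$ floating-point numbers require exponent comparison, mantissa alignment, and correct rounding, none of which are obviously constant-depth. The depths $d_\mathrm{std}$ and $d_\oplus$ precisely encapsulate the cost of these operations, and the whole argument hinges on having them available as $\mathsf{TC}^0$ subroutines from the floating-point preliminaries. Granting those, the only remaining checks are bookkeeping: that the number of gates stays polynomial (immediate from $n_1 n_2 d \le \poly(n)$) and that the depths add rather than multiply under composition (immediate, since the two blocks are stacked, not nested). I expect the iterated floating-point sum to be the genuine technical bottleneck underlying $d_\oplus$, which is why the depth is stated as the sum $d_\mathrm{std} + d_\oplus$ rather than a single constant.
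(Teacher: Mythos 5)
Your argument is correct and is exactly the standard proof of this fact: parallel pairwise floating-point multiplications (depth $d_\mathrm{std}$, via Part~1 of Lemma~\ref{lem:operations_tc0}) followed by an iterated floating-point sum over the $d$ products for each output entry (depth $d_\oplus$, via Part~3), with polynomial size from $n_1 n_2 d \le \poly(n)$ and uniformity from the regular wiring. Note that the paper itself does not prove this lemma but imports it verbatim from \cite{cll+24}, and your reconstruction matches the intended argument there.
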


Here, we extend the matrix operations to compute the Hopfield attention matrix.

\begin{lemma}[Computation of Hopfield attention matrix in $\mathsf{TC}^0$]\label{lem:hopfield_attention_matrix_computation_tc0}
Let precision $p \le \poly(n)$. One can simulate the Hopfield attention matrix $A$ using a $\mathsf{DLOGTIME}$-uniform threshold circuit with depth $4d_\mathsf{std} + 3d_\oplus + d_\mathrm{exp}$ and size bounded by $\poly(n)$.
\end{lemma}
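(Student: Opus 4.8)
The plan is to decompose the computation of the Hopfield attention matrix $A_{i,j} = \exp(\beta \cdot R_{i,*} W_Q W_K^\top Y_{j,*}^\top)$ into a sequence of basic matrix operations, each of which lives in $\mathsf{DLOGTIME}$-uniform $\mathsf{TC}^0$, and then compose them while tracking the total depth. The stated depth bound $4d_\mathsf{std} + 3d_\oplus + d_\mathrm{exp}$ is essentially a ledger of how many matrix-multiplication steps and one exponentiation step are chained together, so the real content is to identify the right parenthesization and invoke Lemma~\ref{lem:matrix_multiplication_tc0} the correct number of times.

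First I would write the bilinear form as a product of four factors, $R W_Q W_K^\top Y^\top$, and compute it via three successive matrix multiplications. A natural grouping is to form $W_Q W_K^\top$ first (one matrix product), then $R (W_Q W_K^\top)$ (a second product), and finally $(R W_Q W_K^\top) Y^\top$ (a third product). Each invocation of Lemma~\ref{lem:matrix_multiplication_tc0} contributes depth $d_\mathsf{std} + d_\oplus$, since the inner dimensions are all bounded by $\poly(n)$ and the precision is $p \le \poly(n)$; three such products account for $3 d_\mathsf{std} + 3 d_\oplus$ of the depth budget. Throughout, I would note that transposition $W_K^\top$ and $Y^\top$ is free in the circuit model (it is just a relabeling of wires and requires no gates), and that the uniformity is preserved because the composition of $\mathsf{DLOGTIME}$-uniform circuits of bounded depth is again $\mathsf{DLOGTIME}$-uniform.

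Next, to go from the inner product matrix $M := \beta \cdot R W_Q W_K^\top Y^\top$ to $A_{i,j} = \exp(M_{i,j})$, I would apply the entrywise operations: scaling each entry by the scalar $\beta$ (one layer of multiplication, the fourth $d_\mathsf{std}$), and then applying $\exp$ entrywise. The entrywise $\exp$ of a $\poly(n)$-precision floating-point number is computable in $\mathsf{TC}^0$ with depth $d_\mathrm{exp}$; since all $n^2$ entries are exponentiated in parallel, this contributes only a single $d_\mathrm{exp}$ to the depth. Summing the pieces gives $3 d_\mathsf{std}$ (three products) $+ 3 d_\oplus$ (the associated iterated additions) $+ d_\mathsf{std}$ (the $\beta$ scaling) $+ d_\mathrm{exp}$, matching the claimed $4 d_\mathsf{std} + 3 d_\oplus + d_\mathrm{exp}$. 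The size bound stays $\poly(n)$ because each stage is $\poly(n)$-size and we compose a constant number of stages.

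The main obstacle I anticipate is not the matrix-product bookkeeping but verifying that the $\mathsf{TC}^0$-computability of $\exp$ is applicable with the right precision guarantees: one must confirm that an appropriate lemma (the $\mathsf{TC}^0$ bound for the exponential function on $\mathbb{F}_p$, presumably established in the appendix along with the other floating-point primitives) yields depth exactly $d_\mathrm{exp}$ and size $\poly(n)$ for $p \le \poly(n)$ inputs, and that the outputs of the preceding matrix multiplications stay within the precision regime on which that lemma is valid. A secondary subtlety is ensuring that the scalar multiplication by $\beta$ and the entrywise structure of $\exp$ genuinely parallelize across all $n^2$ entries so that they cost only one layer each rather than scaling with $n$; this follows from the fact that threshold circuits can replicate a fixed subcircuit across independent coordinates without depth blowup, which I would state explicitly when composing the stages.
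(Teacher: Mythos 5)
Your decomposition — three invocations of Lemma~\ref{lem:matrix_multiplication_tc0} to form $R W_Q W_K^\top Y^\top$, one layer of scalar multiplication by $\beta$, and one parallel entrywise $\exp$ via the floating-point exponential lemma, summing to $4d_\mathrm{std} + 3d_\oplus + d_\mathrm{exp}$ — is exactly the paper's argument (the paper phrases the last two products per entry $s_{i,j} = R_{i,*}(W_QW_K^\top)Y_{j,*}^\top$ rather than as full matrix products, but the depth accounting is identical). Your additional remarks on free transposition, parallelism across the $n^2$ entries, and precision for $\exp$ are consistent with what the paper implicitly relies on, so the proposal is correct and takes essentially the same route.
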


\begin{proof}
To compute $A_{i,j}$, we use the following steps:

\begin{enumerate}
   \item 
   By Lemma~\ref{lem:matrix_multiplication_tc0}, we can compute the matrix product $W_QW_K^\top$ by a $\mathsf{DLOGTIME}$-uniform circuit which has size $\poly(n)$ and depth $d_\mathrm{std} + d_\oplus$.
   \item
   By Lemma~\ref{lem:matrix_multiplication_tc0}, we can the scalar
   \begin{align*}
   s_{i,j} = R_{i,*}(W_QW_K^\top)Y_{j,*}^\top
   \end{align*}
   by a $\mathsf{DLOGTIME}$-uniform circuit which has size $\poly(n)$ and depth $2(d_\mathrm{std} + d_\oplus)$, following Lemma~\ref{lem:matrix_multiplication_tc0}.
   \item 
   The term $\beta \cdot s_{i,j}$ is computed with depth $d_\mathrm{std}$ using Part 1 of Lemma~\ref{lem:operations_tc0}.
   \item 
   Using Lemma~\ref{lem:exp_tc0}, the exponential function $\exp(\beta \cdot s_{i,j})$ is computable with depth $d_\mathrm{exp}$.
\end{enumerate}

Summing the circuit depths for all steps, the total depth required for $A_{i,j}$ is
$
d_\mathrm{total} = 4d_\mathrm{std} + 3d_\oplus + d_\mathrm{exp}.
$
Since all entries of $A$ can be simulated in parallel, the total depth $4d_\mathrm{std} + 3d_\oplus + d_\mathrm{exp}$ and size $\poly(n)$, completing the proof.
\end{proof}

\subsection{Computing Single Hopfield Layer}\label{sub:computing_hopfield_layer}

This section analyzes the computation of a single Hopfield layer, including the necessary depth and size requirements.

\begin{lemma}[Computation of Hopfield layer in $\mathsf{TC}^0$]\label{lem:single_Hopfield_layer_computation_tc0}
Let precision $p \le \poly(n)$. We can simulate the Hopfield layer using a $\mathsf{DLOGTIME}$-uniform threshold circuit with depth $8d_\mathrm{std} + 6d_\oplus + d_\mathrm{exp}$ and size bounded by $\poly(n)$.
\end{lemma}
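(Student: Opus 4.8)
The plan is to read off $\mathsf{Hop}_i(R,Y_i)=D^{-1}AY_i\tilde W_V$ from Definition~\ref{def:single_hopfield_layer} as a short pipeline of operations, show that each stage lies in $\mathsf{DLOGTIME}$-uniform $\mathsf{TC}^0$ of $\poly(n)$ size, and then compose the stages while tracking the circuit depth. Since Lemma~\ref{lem:hopfield_attention_matrix_computation_tc0} already produces the attention matrix $A$ at depth $4d_\mathsf{std}+3d_\oplus+d_\mathrm{exp}$, the only new work is the three matrix products that build $D^{-1}AY_i\tilde W_V$ out of $A$ together with the diagonal normalizer $D^{-1}$. I expect these to add exactly $3(d_\mathsf{std}+d_\oplus)+d_\mathsf{std}=4d_\mathsf{std}+3d_\oplus$, giving the stated $8d_\mathsf{std}+6d_\oplus+d_\mathrm{exp}$.

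Concretely, after obtaining $A$ I would first form $\tilde W_V=W_KW_V$ by one application of Lemma~\ref{lem:matrix_multiplication_tc0} (depth $d_\mathsf{std}+d_\oplus$), then compute $Y_i\tilde W_V$ and $A(Y_i\tilde W_V)$ by two further applications of the same lemma (each $d_\mathsf{std}+d_\oplus$). Concurrently I would build the normalizer: the row sums $A\mathbf 1_n$ form an iterated addition (depth $d_\oplus$), the scaling by $\beta$ is one standard multiplication, and inverting the diagonal entries to obtain $D^{-1}=\diag(\beta A\mathbf 1_n)^{-1}$ is one division, all supplied by Lemma~\ref{lem:operations_tc0}; because these depend only on $A$, they overlap with the value-side products. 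Finally, left-multiplying by the diagonal $D^{-1}$ is simply rescaling row $i$ of $AY_i\tilde W_V$ by $(D^{-1})_{i,i}$, i.e.\ one standard multiplication per entry (depth $d_\mathsf{std}$), which I would place last on the critical path.

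Summing these depths yields $8d_\mathsf{std}+6d_\oplus+d_\mathrm{exp}$; every entry of every intermediate matrix is produced by an independent subcircuit, so the total size stays $\poly(n)$, and $\mathsf{DLOGTIME}$-uniformity is inherited from the uniformity of each lemma invoked. The step I expect to be most delicate is the normalizer: I must confirm that floating-point division at $\poly(n)$ precision stays inside $\mathsf{TC}^0$ (so that forming $D^{-1}$, equivalently the softmax denominator, does not leave the class), and I must schedule the row-sum and reciprocal computations against the three matrix products so that the depths compose additively to the claimed bound rather than accruing an extra $d_\oplus$ or $d_\mathsf{std}$. Verifying that the final diagonal multiply is genuinely a per-entry standard operation, and not a full matrix product, is the other place where the depth count could silently inflate.
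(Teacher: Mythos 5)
Your proof is correct and follows essentially the same route as the paper: both build $A$ via Lemma~\ref{lem:hopfield_attention_matrix_computation_tc0}, form the normalizer $D$ from $\beta\cdot A\mathbf 1_n$ via Lemma~\ref{lem:operations_tc0}, chain the remaining matrix products via Lemma~\ref{lem:matrix_multiplication_tc0}, and finish with a parallel per-entry division, arriving at the same depth $8d_\mathrm{std}+6d_\oplus+d_\mathrm{exp}$. The only difference is bookkeeping: the paper counts the $D$ branch sequentially (adding $d_\oplus+d_\mathrm{std}$) and treats $\tilde W_V$ as given (so only two products for $AY\tilde W_V$), whereas you overlap the $D$ branch with the value path and spend the freed depth on explicitly forming $\tilde W_V=W_KW_V$ as a third product---either accounting yields the same total.
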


\begin{proof}
The computation involves multiplying matrices $D^{-1}, A, Y, \tilde{W}_V$. First, we compute $D^{-1}$ and $A$:

    {\bf Part 1.} $D = \mathrm{diag}(\beta \cdot A \mathbf{1}_n)$ is computed with depth $d_\oplus + d_\mathrm{std}$, as shown in Part 1 and Part 3 of Lemma~\ref{lem:operations_tc0}. If we compute $D^{-1}$ , it requires a depth of $d_\oplus + 2 d_{\mathrm{std}}$ follows from Part 1 of Lemma~\ref{lem:operations_tc0}.
    
    {\bf Part 2.} From Lemma~\ref{lem:hopfield_attention_matrix_computation_tc0}, $A$ is computed with depth $4d_\mathrm{std} + 3d_\oplus + d_\mathrm{exp}$.
    
    {\bf Part 3.} Then, we can multiply $A,Y$ and $\wt{W}_V$, which can be computed by a depth $2(d_\mathrm{std} + d_\oplus)$, size $\poly(n)$ uniform threshold circuit following from Lemma~\ref{lem:matrix_multiplication_tc0}.
    
    {\bf Part 4.} Finally, the division $D^{-1} \cdot (A Y \tilde{W}_V)$ is computed in parallel with depth $d_\mathrm{std}$.
Combining these depths,
$
d_\mathrm{total} = 8d_\mathrm{std} + 6d_\oplus + d_\mathrm{exp}.
$
The total size of the circuit remains $\poly(n)$, completing the proof.
\end{proof}

\subsection{Computing Common Components Layers}\label{sub:computing_common_blocks}

Definition~\ref{def:modern_hopfield_networks} introduces Modern Hopfield Networks, which incorporate the Hopfield layer along with other modules such as layer normalization and two-layer ReLU feed-forward networks. In this section, we present the computational complexity of these additional components.

We begin by analyzing the circuit complexity of the two-layer ReLU feed-forward networks.

\begin{lemma}[Computation of Two-layer ReLU Feed-forward Networks in $\mathsf{TC}^0$]\label{lem:ff_computation_tc0}
Let $p \leq \poly(n)$, one can simulate the two-layer ReLU feed-forward neural networks using a $\mathsf{DLOGTIME}$-uniform threshold circuit which has depth $4d_\mathrm{std} + 3d_{\oplus}$ and size bounded by $\poly(n)$.
\end{lemma}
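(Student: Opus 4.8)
The plan is to decompose the two-layer ReLU feed-forward map of Definition~\ref{def:feed_forward_nns} into its three elementary stages---the first affine map $W_1 X_{i,*} + b_1$, the entrywise $\relu$ activation, and the second affine map $W_2(\cdot) + b_2$---to bound the depth and size of each stage using the primitives already established, and then to invoke closure of $\mathsf{DLOGTIME}$-uniform $\mathsf{TC}^0$ under composition of a constant number of such stages.

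First I would handle the two affine maps. Since each output row $i \in [n]$ is computed independently and in parallel, it suffices to bound a single matrix-vector product followed by a bias addition. By Lemma~\ref{lem:matrix_multiplication_tc0}, viewing $X_{i,*}$ as a $d \times 1$ matrix, each product $W_1 X_{i,*}$ and $W_2(\cdot)$ is computable by a $\mathsf{DLOGTIME}$-uniform threshold circuit of depth $d_\mathrm{std} + d_\oplus$ and size $\poly(n)$; the subsequent entrywise addition of the bias vectors $b_1, b_2$ costs an additional $d_\mathrm{std}$ per layer by the addition primitive in Lemma~\ref{lem:operations_tc0}. All of these pieces are $\poly(n)$-size and constant-depth.

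Next I would treat the $\relu$ activation. Entrywise, $\relu(z) = \max(z, 0)$, which on the floating-point encoding $\mathbb{F}_p$ reduces to a sign test on $z$ together with a conditional selection between $z$ and $0$. Both comparison and selection on $\mathbb{F}_p$ are constant-depth threshold operations supplied by Lemma~\ref{lem:operations_tc0}, and since the $d$ coordinates are processed in parallel, the entire activation layer remains $\poly(n)$-size and contributes only constant depth. Summing the constant depths of the two affine maps and the activation then gives the stated bound $4d_\mathrm{std} + 3d_\oplus$, and the composition of finitely many $\poly(n)$-size constant-depth circuits is again $\poly(n)$-size.

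I expect the $\relu$ stage to require the most care: unlike the purely arithmetic affine maps, it is a nonlinear, piecewise-defined operation, so the argument must exhibit it explicitly as a comparison-plus-select realizable within $\mathsf{TC}^0$ and verify that this fits inside the stated depth budget rather than inflating it. The remaining subtlety is bookkeeping---confirming that the $\mathsf{DLOGTIME}$-uniformity of each primitive is preserved under the constant-depth composition, which follows because concatenating finitely many uniform circuit descriptions keeps the overall connection language decidable in logarithmic time.
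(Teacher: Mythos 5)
Your proposal is correct and follows essentially the same route as the paper: decompose into the first matrix--vector product (Lemma~\ref{lem:matrix_multiplication_tc0}, depth $d_\mathrm{std}+d_\oplus$), the bias addition and entrywise $\relu$ via the constant-depth comparison/selection primitives of Lemma~\ref{lem:operations_tc0}, the second affine map, and then parallelize over the rows $i\in[n]$. Your itemized costs sum to roughly $5d_\mathrm{std}+2d_\oplus$ rather than the stated $4d_\mathrm{std}+3d_\oplus$, but the paper's own tally has the same looseness, and since every term is a constant this does not affect the conclusion.
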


\begin{proof}
For each $i \in [n]$, Lemma~\ref{lem:matrix_multiplication_tc0} guarantees that the computation of $W_1 \cdot X_{i,*}$ requires a $\mathsf{DLOGTIME}$-uniform circuit which has depth $d_\mathrm{std} + d_\oplus$ and size $\poly(n)$. Applying Part 1 of Lemma~\ref{lem:operations_tc0}, we can compute $W_1 \cdot X_{i,*} + b_1$ with an additional $\mathsf{DLOGTIME}$-uniform circuit which has depth $d_\mathrm{std}$ and size $\poly(n)$. The ReLU activation, $\relu(W_1 \cdot X_{i,*} + b_1)$, also requires depth $d_\mathrm{std}$ and size $\poly(n)$ as per Part 1 of Lemma~\ref{lem:operations_tc0}. 

For the second layer, the circuit depth needed is $2d_\mathrm{std} + d_\oplus$, and the size remains $\poly(n)$. Thus, the total circuit depth across both layers is $4d_\mathrm{std} + 3d_\oplus$, with the size still bounded by $\poly(n)$ because these computations are able to be executed in parallel for each $i \in [n]$.
\end{proof}

\subsection{Computing Modern Hopfield Networks}\label{sub:computing_mhn}

We demonstrate how to compute Modern Hopfield Networks within the $\mathsf{TC}^0$ complexity class.

\begin{lemma}[Computation of Modern Hopfield Networks in $\mathsf{TC}^0$]\label{lem:mhn_computation_tc0}
Suppose that for every $i \in [m]$, then we can simulate function $f_i$ in $\mathsf{MHN}$ by a $\mathsf{DLOGTIME}$-uniform threshold circuit which has size $\poly (n)$ and constant depth $d_f$. When $p \leq \poly(n)$, the overall computation of the network 
can be implemented using a $\mathsf{DLOGTIME}$-uniform threshold circuit with depth $(m+1)d_f + 8md_\mathrm{std} + 6md_\oplus + md_\mathrm{exp}$ and size bounded by $\poly(n)$.
\end{lemma}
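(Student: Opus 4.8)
The plan is to unfold the recursive definition of $\mathsf{MHN}$ from Definition~\ref{def:modern_hopfield_networks} into an explicit chain of alternating sub-computations, bound each piece separately using the results already established, and then argue that stacking these circuits preserves membership in $\mathsf{DLOGTIME}$-uniform $\mathsf{TC}^0$ while the depths add. Concretely, the network factors as the composition
\begin{align*}
\mathsf{MHN} = f_m \circ \mathsf{Hop}_m \circ f_{m-1} \circ \mathsf{Hop}_{m-1} \circ \cdots \circ \mathsf{Hop}_1 \circ f_0,
\end{align*}
consisting of the $m+1$ auxiliary maps $f_0, f_1, \ldots, f_m$ and the $m$ Hopfield layers $\mathsf{Hop}_1, \ldots, \mathsf{Hop}_m$. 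Since every intermediate map has domain and codomain $\mathbb{F}_p^{n \times d}$, the outputs type-check at every stage and the composed circuit interface is well-defined.

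First I would invoke the hypothesis that each $f_i$ is realized by a $\mathsf{DLOGTIME}$-uniform threshold circuit of constant depth $d_f$ and size $\poly(n)$; across all $m+1$ occurrences this contributes a combined depth of $(m+1)d_f$. Next I would apply Lemma~\ref{lem:single_Hopfield_layer_computation_tc0} to each of the $m$ Hopfield layers, each computable by a $\mathsf{DLOGTIME}$-uniform threshold circuit of depth $8d_\mathrm{std} + 6d_\oplus + d_\mathrm{exp}$ and size $\poly(n)$, contributing $m(8d_\mathrm{std} + 6d_\oplus + d_\mathrm{exp})$. Summing the two contributions yields exactly the claimed depth $(m+1)d_f + 8md_\mathrm{std} + 6md_\oplus + md_\mathrm{exp}$. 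For the size bound, I would observe that the whole network is a stack of $2m+1$ sub-circuits, each of size $\poly(n)$; since $m = O(1)$ is constant, the total size is a constant-fold sum of polynomials and hence remains $\poly(n)$, and the total depth remains constant.

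The step I expect to require the most care is verifying that $\mathsf{DLOGTIME}$-uniformity survives the composition. When wiring the output gates of one stage to the input gates of the next, I would re-index the gates of the $k$-th sub-circuit by a fixed offset equal to the cumulative gate count of the earlier stages; because there are only a constant number ($2m+1$) of stages and each has $\poly(n)$ gates, these offsets are computable by simple arithmetic, so the direct-connection language of the whole circuit stays decidable in $O(\log n)$ time given the uniformity machines for the individual stages. The remaining obligations — confirming that depths and sizes add as claimed and that constant depth is preserved because $m$ is constant — are routine bookkeeping, so the uniformity-under-composition argument is the only genuinely delicate point.
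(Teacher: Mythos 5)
Your proposal is correct and follows essentially the same route as the paper's own proof: unfold the composition into the $m+1$ maps $f_i$ and the $m$ Hopfield layers, apply the hypothesis and Lemma~\ref{lem:single_Hopfield_layer_computation_tc0} to each stage, and sum the depths while noting the size stays $\poly(n)$. Your extra paragraph on why $\mathsf{DLOGTIME}$-uniformity survives the constant-fold composition is a detail the paper leaves implicit, but it does not change the argument.
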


\begin{proof}
By assumption, we can simulate $f_i$ using a $\mathsf{DLOGTIME}$-uniform threshold circuit which has size bounded by $\poly(n)$ and depth $d_f$.

Next, by Lemma~\ref{lem:single_Hopfield_layer_computation_tc0}, we know that the computation of a single Hopfield layer $\mathsf{Hop}_i$ can be performed using a $\mathsf{DLOGTIME}$-uniform threshold circuit with depth $8d_\mathrm{std} + 6d_\oplus + d_\mathrm{exp}$, while its size remains polynomial in $n$.

The complete computation of $\mathsf{MHN}(R)$ involves $g_0, g_1, \ldots, g_m$ and $\mathsf{Hop}_1, \mathsf{Hop}_2, \ldots, \mathsf{Hop}_m$. The total circuit depth is, therefore, the sum of the depths of all these computations. Explicitly, the depth is $
(m+1)d_f + 8md_\mathrm{std} + 6md_\oplus + md_\mathrm{exp}.
$
\end{proof}

\subsection{Main Result: Circuit Complexity Bound of Modern Hopfield Networks}\label{sub:main_result_mhn}

Our main result establishes the circuit complexity bounds for Modern Hopfield Networks.

\begin{theorem}[Circuit complexity of Modern Hopfield Networks]\label{thm:circuit_complexity_bound_mhn} Consider a modern Hopfield network $\mathsf{MHN}$ where we can simulate each component function $f_i$ for $i \in [m]$ by a $\mathsf{DLOGTIME}$-uniform threshold circuit which has $d_f$ depth and $\poly(n)$ size. If precision $p \leq \poly(n)$, hidden dimension $d \leq O(n)$, and number of layers $m = O(1)$, then we can simulate $\mathsf{MHN}$ by a $\mathsf{DLOGTIME}$-uniform circuit family in $\mathsf{TC}^0$. \end{theorem}

\begin{proof} Given $m = O(1)$, Lemma~\ref{lem:single_Hopfield_layer_computation_tc0} establishes that the circuit depth required for $\mathsf{MHN}(R)$ is given by: 
\begin{align*}
(m+1)d_f + 8md_\mathrm{std} + 6md_\oplus + md_\mathrm{exp}.
\end{align*}

Since $m$ is a constant, the total circuit depth simplifies to $O(1)$. Furthermore, the size of the circuit remains within $\poly(n)$.

Hence, the modern Hopfield network can be realized using a $\mathsf{DLOGTIME}$-uniform circuit family belonging to $\mathsf{TC}^0$, completing the argument. \end{proof}

Theorem~\ref{thm:circuit_complexity_bound_mhn} implies that unless $\mathsf{TC}^0 = \mathsf{NC}^1$, Modern Hopfield Networks with $\poly(n)$-precision, constant-depth, $\poly(n)$-size can be simulated by a $\mathsf{DLOGTIME}$-uniform $\mathsf{TC}^0$ circuit family. It means that although Modern Hopfield Networks gain success empirically, it still suffers fundamental expressivity limitations under circuit complexity.

\section{Complexity of Kernelized Hopfield Networks}\label{sec:complexity_k_hopfield_model}

In this section, we extend the complexity results of Modern Hopfield Networks to Kernelized Hopfield Networks. The formal definition of Kernelized Hopfield Networks is provided in Section~\ref{sec:kernelized_hopfield_networks}. 

Section~\ref{sub:computing_k_hopfield_attention_matrix} analyzes the computation of the kernelized Hopfield attention matrix. In Section~\ref{sub:computing_k_hopfield_layer}, we examine the computation of a single kernelized Hopfield layer. Section~\ref{sub:computing_khm} details the computation of the entire kernelized Hopfield network. Finally, in Section~\ref{sub:main_result_khm}, we present the main results on the circuit complexity bounds for Kernelized Hopfield Networks. 

\subsection{Computing Kernelized Hopfield Attention Matrix}\label{sub:computing_k_hopfield_attention_matrix}

In this section, we generalize the computing of the Hopfield attention matrix to the kernelized Hopfield attention matrix.

\begin{lemma}[Computation of kernelized Hopfield attention matrix in $\mathsf{TC}^0$]\label{lem:k_hopfield_attention_matrix_computation_tc0}
Assuming that $p \le \poly(n)$ and the linear affine feature map $\Phi$ is defined as $\Phi(u) := Wu$ for $u, v \in \mathbb{F}_p^d$, where $W \in \mathbb{F}_p^{D_\Phi \times d}$ and the linear kernel is $\mathcal{K}(u,v) = u^\top W^\top W v$, then the kernelized Hopfield attention matrix $A$ can be implemented using a $\mathsf{DLOGTIME}$-uniform threshold circuit of depth $3d_\mathrm{std} + 2d_\oplus + d_\mathrm{exp}$ and size bounded by $\poly(n)$.
\end{lemma}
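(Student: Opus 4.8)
The plan is to mirror the proof of Lemma~\ref{lem:hopfield_attention_matrix_computation_tc0}, exploiting the fact that a \emph{linear} feature map collapses the kernel into a single bilinear form. Writing $R_i$ and $Y_j$ for the $i$-th and $j$-th rows of $R$ and $Y$, the assumption $\Phi(u) = Wu$ gives $\langle \Phi(R_iW_Q), \Phi(Y_jW_K)\rangle = R_i\, M\, Y_j^\top$, where $M$ is a single fixed matrix formed by fusing the projection weights $W_Q, W_K$ with the kernel Gram matrix $W^\top W$. Hence $A_{i,j} = \exp(\beta \cdot R_i M Y_j^\top)$ has exactly the same shape as the dense attention entry in Definition~\ref{def:k_attention_matrix}, but with all the weight matrices already fused into the single constant $M$. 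The key observation is that the entries of $M$ are functions of the model parameters alone, so they can be treated as hardwired constants of the circuit and contribute no depth; this is precisely the source of the one-layer saving relative to the dense case (Lemma~\ref{lem:hopfield_attention_matrix_computation_tc0}), where the product $W_QW_K^\top$ was formed \emph{inside} the circuit at cost $d_\mathrm{std}+d_\oplus$.

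Concretely I would proceed in four steps, all in parallel over the $n^2$ pairs $(i,j)$. First, fold the fixed weights into $M$ off-circuit. Second, compute the scalar $s_{i,j} = R_i M Y_j^\top$ by two chained matrix-vector products — namely $R_i M$ followed by the inner product of the result with $Y_j^\top$ — each realizable by a $\mathsf{DLOGTIME}$-uniform threshold circuit of depth $d_\mathrm{std}+d_\oplus$ and size $\poly(n)$ via Lemma~\ref{lem:matrix_multiplication_tc0}. Third, form $\beta \cdot s_{i,j}$ using Part~1 of Lemma~\ref{lem:operations_tc0} at depth $d_\mathrm{std}$. Fourth, apply the exponential via Lemma~\ref{lem:exp_tc0} at depth $d_\mathrm{exp}$. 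Producing every entry simultaneously keeps the total size within $\poly(n)$.

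Summing the depths gives $2(d_\mathrm{std}+d_\oplus) + d_\mathrm{std} + d_\mathrm{exp} = 3d_\mathrm{std}+2d_\oplus+d_\mathrm{exp}$, matching the claim. I expect the matrix-multiplication and exponential invocations to be routine; the two delicate points are (i) justifying that the weight-only fusion $M$ may be precomputed so that only two input-dependent products remain — this is exactly what separates the stated bound from the dense bound $4d_\mathrm{std}+3d_\oplus+d_\mathrm{exp}$ — and (ii) verifying that the feature dimension $D_\Phi$ and all intermediate inner dimensions are polynomially bounded (indeed $\le \poly(n)$) so that the hypotheses of Lemma~\ref{lem:matrix_multiplication_tc0} are met at every product. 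The main obstacle is therefore careful depth bookkeeping and checking the dimension hypotheses rather than any genuinely new idea, since the linearity of $\Phi$ reduces the kernelized entry to the dense computation with one fewer in-circuit multiplication.
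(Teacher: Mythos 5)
Your argument is correct in outline but takes a genuinely different route from the paper's, and the comparison is worth spelling out because the paper's own statement and proof disagree with each other. In Appendix~\ref{sub:computing_k_hopfield_attention_matrix_proofs:formal} the paper computes \emph{all} of the matrix products $(WR_iW_Q)^\top(WY_jW_K)$ inside the circuit --- five applications of Lemma~\ref{lem:matrix_multiplication_tc0} at cost $5(d_\mathrm{std}+d_\oplus)$ --- and arrives at a total depth of $6d_\mathrm{std}+5d_\oplus+d_\mathrm{exp}$, not the $3d_\mathrm{std}+2d_\oplus+d_\mathrm{exp}$ claimed in the lemma statement; the downstream Lemma~\ref{lem:single_k_Hopfield_layer_computation_tc0} ($10d_\mathrm{std}+8d_\oplus+d_\mathrm{exp}$) is budgeted from the $6d_\mathrm{std}+5d_\oplus+d_\mathrm{exp}$ figure, so the paper internally relies on the in-circuit version. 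Your fusion of the parameters into a single hardwired matrix $M = W_QW^\top WW_K^\top$ is the only way to recover the stated constant, and your depth bookkeeping from that point on ($2(d_\mathrm{std}+d_\oplus)$ for $R_iM$ and then $\langle R_iM, Y_j\rangle$, plus $d_\mathrm{std}$ for the $\beta$-scaling and $d_\mathrm{exp}$ for the exponential, all in parallel over the $n^2$ pairs) is sound. The one point you correctly flag as delicate is in fact the crux: treating $M$ as a zero-depth constant is inconsistent with the convention the paper itself uses in Lemma~\ref{lem:hopfield_attention_matrix_computation_tc0}, where the analogous weight-only product $W_QW_K^\top$ is charged $d_\mathrm{std}+d_\oplus$ of in-circuit depth (and, under $\mathsf{DLOGTIME}$-uniformity, hardwiring arbitrary parameter values into the circuit description is not obviously free). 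If you adopt the paper's convention your argument yields $6d_\mathrm{std}+5d_\oplus+d_\mathrm{exp}$, matching the paper's proof rather than its statement; if you adopt yours, the dense bound in Lemma~\ref{lem:hopfield_attention_matrix_computation_tc0} should likewise drop to $3d_\mathrm{std}+2d_\oplus+d_\mathrm{exp}$. Either way the depth is a constant and the $\mathsf{TC}^0$ membership in Theorem~\ref{thm:circuit_complexity_bound_khm} is unaffected, but you should state explicitly which convention you are using, and note that under the paper's own convention the constant in the lemma statement appears to be a typo for $6d_\mathrm{std}+5d_\oplus+d_\mathrm{exp}$.
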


\begin{proof}
We compute the entry of the kernelized attention matrix $A$:
\begin{align*}
A_{i,j} := \exp (\beta \cdot (WR_i W_Q)^\top (WY_j W_K)).
\end{align*}
Using Lemma~\ref{lem:matrix_multiplication_tc0}, the scalar product
\begin{align*}
s_{i,j} = (WR_i W_Q)^\top (WY_j W_K)
\end{align*}
can be simulated by a $\mathsf{DLOGTIME}$-uniform threshold circuit which has $\poly(n)$ size and $5(d_\mathrm{std} + d_\oplus)$ depth. 

Next, by Part 1 of Lemma~\ref{lem:operations_tc0}, the product $\beta \cdot s_{i,j}$ requires depth $d_\mathrm{std}$. 

Finally, by Lemma~\ref{lem:exp_tc0}, the exponential function $\exp(\beta \cdot s_{i,j})$ can be computed with depth $d_\mathrm{exp}$ and size $\poly(n)$. 

Summing the depths, the overall depth for $A_{i,j}$ is:
\begin{align*}
d_\mathrm{total} = 6d_\mathrm{std} + 5d_\oplus + d_\mathrm{exp}.
\end{align*}

As all $A_{i,j}$ can be computed in parallel, the attention matrix $A$ requires a $\mathsf{DLOGTIME}$-uniform threshold circuit which has depth $6d_\mathrm{std} + 5d_\oplus + d_\mathrm{exp}$ and size $\poly(n)$.

\end{proof}

\subsection{Computing Single Kernelized Hopfield Layer}\label{sub:computing_k_hopfield_layer}

This section provides an analysis of the computation of a single kernelized Hopfield layer, including its circuit depth.

\begin{lemma}[Computation of Single kernelized Hopfield layer in $\mathsf{TC}^0$]\label{lem:single_k_Hopfield_layer_computation_tc0}
Assuming that $p \le \poly(n)$, then the kernelized Hopfield layer can be implemented using a $\mathsf{DLOGTIME}$-uniform threshold circuit of depth $10d_\mathrm{std} + 8d_\oplus + d_\mathrm{exp}$ and size bounded by $\poly(n)$.
\end{lemma}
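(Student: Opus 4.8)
The plan is to decompose the single kernelized Hopfield layer computation
$\mathsf{KHop}_i(R, Y_i) = D^{-1} A Y_i W_V$
into the same sequence of threshold-circuit stages used for the standard Hopfield layer in Lemma~\ref{lem:single_Hopfield_layer_computation_tc0}, but starting from the cheaper kernelized attention matrix of Lemma~\ref{lem:k_hopfield_attention_matrix_computation_tc0}. First I would invoke Lemma~\ref{lem:k_hopfield_attention_matrix_computation_tc0} to obtain the matrix $A$ in depth $3d_\mathrm{std} + 2d_\oplus + d_\mathrm{exp}$ with $\poly(n)$ size. In parallel, I would compute the diagonal normalizer $D = \diag(\beta \cdot A \mathbf{1}_n)$: summing each row of $A$ costs one iterated-addition layer of depth $d_\oplus$ (Part 3 of Lemma~\ref{lem:operations_tc0}) and scaling by $\beta$ costs one multiplication layer of depth $d_\mathrm{std}$ (Part 1 of Lemma~\ref{lem:operations_tc0}), so $D$ is available in depth $d_\oplus + d_\mathrm{std}$, which is dominated by the cost of $A$.

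Next I would handle the matrix products. By Lemma~\ref{lem:matrix_multiplication_tc0}, each of the two products $A Y_i$ and $(A Y_i) W_V$ can be computed by a $\mathsf{DLOGTIME}$-uniform threshold circuit of depth $d_\mathrm{std} + d_\oplus$ and $\poly(n)$ size; chaining them sequentially contributes $2(d_\mathrm{std} + d_\oplus)$. Finally, left-multiplying by the diagonal $D^{-1}$ amounts to dividing each entry of the resulting $n \times d$ matrix by its row normalizer, which is a single entrywise operation of depth $d_\mathrm{std}$ carried out in parallel across all $n d$ entries (Part 1 of Lemma~\ref{lem:operations_tc0}). Accumulating the depths of the attention matrix, the two matrix multiplications, and the final division gives
\begin{align*}
(3d_\mathrm{std} + 2d_\oplus + d_\mathrm{exp}) + 2(d_\mathrm{std} + d_\oplus) + d_\mathrm{std} = 6d_\mathrm{std} + 4d_\oplus + d_\mathrm{exp},
\end{align*}
and since every stage preserves polynomial size and the stages compose into constant depth, the whole layer lies in $\mathsf{DLOGTIME}$-uniform $\mathsf{TC}^0$.

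I should flag that this accounting yields depth $6d_\mathrm{std} + 4d_\oplus + d_\mathrm{exp}$, whereas the stated bound is $10d_\mathrm{std} + 8d_\oplus + d_\mathrm{exp}$; the discrepancy almost certainly comes from the paper mirroring the bookkeeping of the dense case (where $A$ itself costs $4d_\mathrm{std} + 3d_\oplus + d_\mathrm{exp}$ and where an extra matrix product appears), so in writing the proof I would follow the conservative per-step tally that reproduces the claimed constants rather than the optimized one. Either way, the exact constant is immaterial to the conclusion, since any fixed finite depth with $\poly(n)$ size suffices. The main obstacle, and the only step needing genuine care, is verifying that the linear-affine feature map hypothesis of Lemma~\ref{lem:k_hopfield_attention_matrix_computation_tc0} is what makes $A$ computable in constant depth at all: a general learnable $\Phi$ need not be a $\mathsf{TC}^0$ function, so the result is really a statement about linear (or otherwise $\mathsf{TC}^0$-computable) kernels, and I would make that assumption explicit and confine the argument to it.
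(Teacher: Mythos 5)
Your proposal is correct and follows essentially the same decomposition as the paper's proof: obtain $A$ from Lemma~\ref{lem:k_hopfield_attention_matrix_computation_tc0}, form $D = \diag(\beta \cdot A \mathbf{1}_n)$ via Parts 1 and 3 of Lemma~\ref{lem:operations_tc0}, chain the two matrix products $AY_i$ and $(AY_i)W_V$ via Lemma~\ref{lem:matrix_multiplication_tc0}, and finish with a parallel entrywise division of depth $d_\mathrm{std}$. Regarding the constant you flag: the paper's appendix proof of Lemma~\ref{lem:k_hopfield_attention_matrix_computation_tc0} actually tallies $6d_\mathrm{std} + 5d_\oplus + d_\mathrm{exp}$ for $A$ (five matrix products in $(WR_iW_Q)^\top(WY_jW_K)$), contradicting the $3d_\mathrm{std} + 2d_\oplus + d_\mathrm{exp}$ of its own main-text statement, and it counts the $d_\oplus + d_\mathrm{std}$ for $D$ additively rather than overlapping it with the subsequent products, which is exactly how $10d_\mathrm{std} + 8d_\oplus + d_\mathrm{exp}$ arises; as you observe, the precise constant is immaterial to membership in $\mathsf{DLOGTIME}$-uniform $\mathsf{TC}^0$, and your caveat that the argument requires a $\mathsf{TC}^0$-computable (e.g.\ linear) feature map $\Phi$ is a point the paper leaves implicit.
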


\begin{proof}
To compute a single layer, we must multiply the matrices $D^{-1}$, $A$, $Y$, and $W_V$. First, we compute $D$ and $A$. Using $D := \diag(\beta \cdot A1_n)$, $D$ can be simulated with depth $d_\oplus + d_\mathrm{std}$ and size $\poly(n)$ by Part 1 and Part 3 of Lemma~\ref{lem:operations_tc0}. 

From Lemma~\ref{lem:k_hopfield_attention_matrix_computation_tc0}, computing $A$ requires depth $6d_\mathrm{std} + 5d_\oplus + d_\mathrm{exp}$. 

Next, matrix products $A$, $Y$, and $W_V$ can be simulated with depth $2(d_\mathrm{std} + d_\oplus)$ and $\poly(n)$ size, as shown in Lemma~\ref{lem:matrix_multiplication_tc0}. Finally, the computation of $D^{-1} \cdot (AYW_V)$ can be performed using parallel division, requiring depth $d_\mathrm{std}$ and size $\poly(n)$, which follows from Part 1 of Lemma~\ref{lem:operations_tc0}.

Combining these results, the total depth is:
\begin{align*}
d_\mathrm{total} = 10d_\mathrm{std} + 8d_\oplus + d_\mathrm{exp}.
\end{align*}

Since the operations can run in parallel with size $\poly(n)$, the kernelized Hopfield layer is simulated by $\mathsf{DLOGTIME}$-uniform threshold circuit with depth $10d_\mathrm{std} + 8d_\oplus + d_\mathrm{exp}$.
\end{proof}

\subsection{Computing Kernelized Hopfield Networks}\label{sub:computing_khm}

Here, we analyze the computation of the entire kernelized Hopfield network.

\begin{lemma}[Kernelized Hopfield networks computation in $\mathsf{TC}^0$]\label{lem:khm_computation_tc0}
Assume that for each $i \in [m]$,  we can simulate the function $f_i$ in $\mathsf{KHN}$ by a $\mathsf{DLOGTIME}$-uniform threshold circuit which has constant depth $d_f$ and size $\poly(n)$. If the precision $p \le \poly(n)$, then we can simulate the Kernelized Hopfield Networks by a $\mathsf{DLOGTIME}$-uniform threshold circuit of depth $(m+1)d_f + 10md_\mathrm{std} + 8md_\oplus + md_\mathrm{exp}$ and size bounded by $\poly(n)$.
\end{lemma}

\begin{proof}
By assumption, for each $i \in [m]$, we are able to simulate $f_i$ by a $\mathsf{DLOGTIME}$-uniform circuit of  $d_g$ depth and $\poly(n)$ size. From Lemma~\ref{lem:single_Hopfield_layer_computation_tc0}, each $\mathsf{KHop}_i$ has depth $10d_\mathrm{std} + 8d_\oplus + d_\mathrm{exp}$. 

To compute $\mathsf{KHN}(R)$, we need circuits for $f_0, f_1, \ldots, f_m$ and $\mathsf{KHop}_1, \ldots, \mathsf{KHop}_m$. The total circuit depth becomes:
\begin{align*}
(m+1)d_g + 10md_\mathrm{std} + 8md_\oplus + md_\mathrm{exp}.
\end{align*}

The circuit size remains $\poly(n)$, proving that the kernelized Hopfield network is computable under these conditions.
\end{proof}

\subsection{Main Result: Circuit Complexity Bound of Kernelized Hopfield Networks}\label{sub:main_result_khm}

We now present the main result for Kernelized Hopfield Networks, establishing their circuit complexity bounds.

\begin{theorem}[Main result, Circuit complexity bound of Kernelized Hopfield Networks]\label{thm:circuit_complexity_bound_khm}
Assume that for each $i \in [m]$, the function $f_i$ in $\mathsf{KHN}$ is computable by a $\mathsf{DLOGTIME}$-uniform threshold circuit with constant depth $d_f$ and size $\poly(n)$. If precision $p \le \poly(n)$, hidden dimension $d \le O(n)$, and number of layers $m \le O(1)$, then we can simulate the Kernelized Hopfield Networks $\mathsf{KHM}$ by a $\mathsf{DLOGTIME}$-uniform circuit family within $\mathsf{TC}^0$.
\end{theorem}

\begin{proof}
With $m = O(1)$, Lemma~\ref{lem:khm_computation_tc0} ensures that the circuit depth for $\mathsf{KHM}(R)$ is:
\begin{align*}
(m+1)d_g + 10md_\mathrm{std} + 8md_\oplus + md_\mathrm{exp} = O(1),
\end{align*}
with size $\poly(n)$. Thus, the kernelized Hopfield network can be implemented using a uniform circuit family within $\mathsf{TC}^0$.
\end{proof}


Theorem~\ref{thm:circuit_complexity_bound_khm} demonstrates that, unless $\mathsf{TC}^0 = \mathsf{NC}^1$, Kernelized Hopfield Networks with polynomial precision, constant depth, and polynomial size can be implemented by a $\mathsf{DLOGTIME}$-$\mathsf{DLOGTIME}$-uniform circuit family within $\mathsf{TC}^0$. This implies that Kernelized Hopfield Networks are subject to inherent expressivity limitations under circuit complexity constraints despite their empirical success.
\section{Hardness}\label{sec:hardness}
In this section, we present two computational problems along with their corresponding hardness results.
In Section~\ref{sub:undirected_graph_connectivity}, we introduce the undirected graph connectivity problem. 
In Section~\ref{sub:isomorphism}, we introduce the tree isomorphism problem.
In Section~\ref{sub:hardness_results}, we outline the corresponding hardness results. 

\subsection{Undirected Graph Connectivity Problem}\label{sub:undirected_graph_connectivity}

\begin{definition}[Undirected graph connectivity problem]
The undirected graph connectivity problem determines whether a path connects two specific vertices $u$ and $v$ in an undirected graph $G$.
\end{definition}

\begin{definition}[$\mathsf{FL}$, page 8 on \cite{c85}]\label{def:fl}
The class $\mathsf{FL}$ is the set of languages recognized by a deterministic Turing Machine in space $O(\log n)$.
\end{definition}

Following the definitions, we have some results.
\begin{fact}[Proposition 4.1 in \cite{c85}]\label{fac:fl}
$\mathsf{NC}^1 \subseteq \mathsf{FL}$
\end{fact}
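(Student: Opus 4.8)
The statement $\mathsf{NC}^1 \subseteq \mathsf{FL}$ is a classical result, so the plan is to recall the standard logspace circuit-evaluation argument. Fix a language $L \in \mathsf{NC}^1$ recognized by a uniform family $\{C_n\}$ of Boolean circuits of polynomial size, depth $O(\log n)$, and bounded fan-in. The goal is to build a deterministic Turing machine that, on input $x$ of length $n$, decides whether $C_n(x) = 1$ using only $O(\log n)$ workspace, which by Definition~\ref{def:fl} places $L$ in $\mathsf{FL}$. The core observation is that bounded fan-in together with depth $d = O(\log n)$ means that unrolling $C_n$ into a formula tree yields a tree of depth $d$ and hence at most $2^{O(\log n)} = \poly(n)$ leaves, so the tree has polynomial size; this is what makes a recursive, depth-first evaluation feasible.

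The key steps, in order, are as follows. First, assume without loss of generality that every gate has fan-in at most two, absorbing the constant blow-up in size and depth into the $O(\cdot)$ bounds. Second, represent the current position in the depth-first traversal by the root-to-node path, encoded as a string of at most $d$ child-selector symbols over a bounded alphabet; since $d = O(\log n)$, this path occupies $O(\log n)$ bits. Third, run the traversal so that the machine evaluates the left subtree of each binary gate, stores the one-bit result, then evaluates the right subtree and combines the two according to the gate type ($\mathsf{AND}$, $\mathsf{OR}$, $\mathsf{NOT}$); a leaf labelled by an input literal $x_i$ is evaluated by reading the corresponding bit of $x$, and a constant leaf returns its fixed value. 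Storing one partial bit per level again costs $O(\log n)$. Fourth, invoke uniformity: the structure of $C_n$ (the gate type at a given address and the addresses of its children) is computable in logspace for $\mathsf{L}$-uniform families and even more cheaply for $\mathsf{DLOGTIME}$-uniform ones, so at each node the machine can recover the gate type and navigate to the children within $O(\log n)$ auxiliary workspace.

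The main obstacle is keeping the total space at $O(\log n)$ rather than $O(\log^2 n)$. A naive depth-first search that stores, for each of the $O(\log n)$ levels on the stack, the full $O(\log n)$-bit name of the gate at that level would use $O(\log^2 n)$ space. The standard remedy, and the crux of the argument, is to store only the constant-size child-selector at each level and to recompute the identity of the current gate on demand by re-walking the stored path from the root through the uniformity machine; this trades additional running time for space and keeps the working memory at $O(\log n)$. Putting these pieces together shows that $C_n(x)$ can be evaluated in logarithmic space, so $L \in \mathsf{FL}$ and therefore $\mathsf{NC}^1 \subseteq \mathsf{FL}$.
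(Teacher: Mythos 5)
The paper does not prove this statement at all: it is imported as a black-box fact, cited as Proposition 4.1 in \cite{c85}, so there is no internal proof to compare yours against. Your reconstruction is the standard and correct argument (essentially Borodin's theorem that uniform $\mathsf{NC}^1 \subseteq \mathsf{L}$): unroll the bounded fan-in, $O(\log n)$-depth circuit into a polynomial-size formula tree, evaluate it by depth-first search, and you correctly identify the one genuine subtlety, namely that storing full gate addresses on the recursion stack would cost $O(\log^2 n)$ space, which is avoided by keeping only constant-size child selectors per level and recomputing the current gate's identity by re-walking the path from the root through the uniformity machine. The only detail worth making explicit is that this re-walk needs just a single $O(\log n)$-bit register for the current gate address, updated by logspace queries to the direct-connection language, so the total space stays at $O(\log n)$; with that noted, the argument is complete and consistent with the role the fact plays in the paper.
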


\begin{lemma}[Theorem 3 in \cite{cm87}]\label{lem:undirected_graph_connectivity}
Undirected graph connectivity is $\mathsf{NC}^1$-hard for $\mathsf{FL}$ in definition~\ref{def:fl}. When the given graph is known to be a disjoint union of cycles, the connectivity problem is $\mathsf{NC}^1$-complete for $\mathsf{FL}$.
\end{lemma}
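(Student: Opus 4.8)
The plan is to prove both halves of the statement through the standard configuration-graph reduction from deterministic logarithmic space, i.e.\ from $\mathsf{FL}$ (Definition~\ref{def:fl}). For the hardness half, fix any language $L \in \mathsf{FL}$ decided by a deterministic Turing machine $M$ using $O(\log n)$ space, and on input $x$ build the graph $G_x$ whose vertices are the $\poly(n)$ reachable configurations of $M$ on $x$ and whose edges join each configuration to its unique successor. Since $M$ is deterministic, every vertex has out-degree at most one. First I would argue that $x \in L$ if and only if the start configuration and the (unique, canonicalized) accepting configuration lie in the same connected component of $G_x$, which reduces membership in $L$ to an undirected connectivity query of the kind in the definition.

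The key structural step is to force $G_x$ to be a disjoint union of simple paths and cycles, which is what makes the restricted problem meaningful. To do this I would make $M$ also backward-deterministic (in-degree at most one) by applying the standard reversible simulation of a logarithmic-space machine, which costs only a constant-factor overhead in space and hence stays within $O(\log n)$. With both out-degree and in-degree bounded by one, each undirected component of $G_x$ is either a simple path or a simple cycle. Closing each dangling path into a cycle by adding an edge between its two endpoints yields an instance that is a disjoint union of cycles while preserving the connectivity relation between the start and accept vertices; this establishes $\mathsf{FL}$-hardness even for the cycle-restricted version, and hence $\mathsf{NC}^1$-hardness via the inclusion $\mathsf{NC}^1 \subseteq \mathsf{FL}$ of Fact~\ref{fac:fl}.

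For the matching upper bound needed for completeness, I would show that connectivity in a disjoint union of cycles lies in $\mathsf{FL}$: every vertex has degree exactly two, so starting from $u$ one can deterministically walk around its cycle by storing only the current vertex and its predecessor (to avoid immediate backtracking), using $O(\log n)$ space, and accept if $v$ is visited before returning to $u$. Combining this membership with the hardness above yields $\mathsf{NC}^1$-completeness for $\mathsf{FL}$ in the disjoint-cycles case, while the general undirected connectivity problem inherits the hardness directly (note that the general case is not claimed to be in $\mathsf{FL}$, which is exactly why the completeness statement is restricted to unions of cycles).

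Finally, I would verify that the map $x \mapsto G_x$ is computable by an $\mathsf{NC}^1$ (in fact $\mathsf{AC}^0$) reduction: each adjacency bit of $G_x$ depends only on a pair of configuration encodings of length $O(\log n)$ together with the constantly many tape cells of $x$ that the transition function reads, so it is decided by a bounded-depth, polynomial-size circuit, and these circuits are produced uniformly. The main obstacle I expect is the reversibility construction --- transforming $M$ into a backward-deterministic machine so that its configuration graph is genuinely a disjoint union of paths and cycles, while simultaneously keeping the space bound at $O(\log n)$ and keeping the edge-generating map inside $\mathsf{NC}^1$; once this normal form is in place, the cycle-traversal argument and the circuit-depth bookkeeping are routine.
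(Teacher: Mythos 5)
First, note that the paper does not actually prove this lemma: it is imported verbatim as Theorem~3 of \cite{cm87} (Cook--McKenzie), so there is no in-paper argument to compare yours against, and the original proof there proceeds by a direct Euler-tour-style construction on the configuration graph rather than by reversible simulation (which was not available in 1987). Your overall plan --- generic reduction from an $\mathsf{FL}$ machine's configuration graph, normalize to in-/out-degree at most one, close paths into cycles, and give a logspace cycle-walking algorithm for the upper bound --- is the right shape. But there is one step that, as written, is circular: ``closing each dangling path into a cycle by adding an edge between its two endpoints.'' To add that edge, the reduction must know \emph{which} degree-one vertex pairs with which, i.e.\ it must decide which endpoints lie on the same path --- and that is exactly the undirected connectivity problem you are reducing \emph{to}. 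A map that needs to solve path connectivity cannot be claimed to be an $\mathsf{NC}^1$ (let alone $\mathsf{AC}^0$) reduction. The standard repair is local: take a mirrored copy $v'$ of every vertex, duplicate every edge $(u,v)$ as $(u',v')$, and attach $(v,v')$ at every vertex of degree at most one; each path component then folds into a single cycle and each cycle component splits into two, preserving whether $s$ and $t$ are connected, and every output bit depends on $O(1)$ local data. Without this (or an equivalent device such as Cook--McKenzie's doubled-edge traversal), the hardness claim for the cycle-restricted problem does not go through.

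Two smaller points. Your first-paragraph equivalence ``$x\in L$ iff the start and accept configurations lie in the same undirected component'' is true for a deterministic (out-degree $\le 1$) machine, but not merely because the out-degree is one: you need the observation that in a functional graph all forward trajectories within one weak component eventually merge, so each component contains at most one halting configuration; this should be said, since undirected connectivity a priori ignores edge direction. And the ``standard reversible simulation'' you invoke with constant-factor space overhead is the Lange--McKenzie--Tapp theorem --- a legitimate but heavyweight tool whose use should be flagged explicitly (you must also check that the reversible machine's transition relation is still locally decidable so that each adjacency bit of $G_x$ remains computable by a constant-depth, $\poly(n)$-size uniform circuit, which it is). The membership half (walking a degree-two graph storing only the current vertex and its predecessor in $O(\log n)$ space) is correct.
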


\subsection{Tree Isomorphism Problem}\label{sub:isomorphism}

\begin{definition}[Colored trees, page 2 on \cite{jmt98}]
A tree with n nodes is said to be colored if each node in the tree is labeled with a positive integer no greater than n.
\end{definition}

\begin{definition}[Isomorphism, page 2 on \cite{jmt98}]
An isomorphism between two colored trees $T_1$ and $T_2$ is a bijection between their node sets that satisfies the following conditions:
\begin{itemize}
    \item The root of $T_1$ maps to the root of $T_2$.
    \item The colors of the nodes are preserved.
    \item The structure of the edges is maintained.
\end{itemize}
We denote $T_1 \simeq T_2$ if an isomorphism exists between them. These notions are similarly applicable to non-colored trees.
\end{definition}

Following the definition, we explore its computational implications.

\begin{definition}[Tree isomorphism problem, page 2 on \cite{jmt98}]
Given two trees $T_1$ and $T_2$, the tree isomorphism problem is to determine whether $T_1 \simeq T_2$. If the trees are colored, the problem is referred to as the colored tree isomorphism problem.
\end{definition}

\begin{lemma}[Theorem 3.3 in \cite{jmt98}]\label{lem:tree_isomorphism}
In the string representation, the colored tree isomorphism problem and the tree isomorphism problem are $\mathsf{NC}^1$-complete under $\le^\mathrm{DLT}$, where $\le^\mathrm{DLT}$ denotes $\mathsf{DLOGTIME}$ reducibility.
\end{lemma}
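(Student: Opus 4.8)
The plan is to prove the two halves of $\mathsf{NC}^1$-completeness separately: first that the (colored) tree isomorphism problem lies in $\mathsf{NC}^1$, and then that every language in $\mathsf{NC}^1$ reduces to it under $\le^\mathrm{DLT}$. I would work throughout with the canonical balanced-bracket string encoding of a rooted colored tree, so that the ancestor/descendant relation and the color of each node can be recovered from input positions by a $\mathsf{DLOGTIME}$ machine; this is what makes $\le^\mathrm{DLT}$ reductions both into and out of the problem feasible.

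For membership, I would use tree canonization. Assign to each node a canonical code bottom-up: a leaf of color $c$ gets a fixed base code determined by $c$, and an internal node gets the code formed by sorting the multiset of its children's codes, concatenating them inside a matching bracket pair, and prepending its own color. The key invariant is that two nodes receive equal codes if and only if the colored subtrees rooted at them are isomorphic, so $T_1 \simeq T_2$ reduces to testing equality of the codes at the two roots. Because each code has length bounded by its subtree size, all codes stay of polynomial length, and the per-node work (sorting bounded tuples of strings and comparing bracket sequences) is in $\mathsf{TC}^0 \subseteq \mathsf{NC}^1$. \emph{The main obstacle is depth}: a naive leaves-to-root evaluation has depth proportional to the tree height, which may be $\Theta(n)$, well above the allowed $O(\log n)$. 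To defeat this I would evaluate the canonical form along a tree-contraction ($\mathsf{RAKE}$/$\mathsf{COMPRESS}$) schedule, showing that each contraction round only performs sort-and-merge operations on already-computed codes, each of which is itself computable in $\mathsf{NC}^1$, so that the entire canonization collapses to logarithmic depth.

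For hardness, I would reduce from the Boolean Formula Value Problem, which is classically $\mathsf{NC}^1$-complete. Given a formula built from $\mathsf{AND}$, $\mathsf{OR}$, and $\mathsf{NOT}$ gates over the constants $0,1$, I fix two small non-isomorphic colored gadget trees representing the truth values and replace each gate by a constant-size gadget whose canonical form is forced, under the canonization above, to coincide with the true-gadget or the false-gadget exactly according to the gate's Boolean output on the iso-types of its child subtrees; since isomorphism type depends only on the multiset of children, the symmetric gates $\mathsf{AND}$ and $\mathsf{OR}$ are realizable directly and $\mathsf{NOT}$ is a relabeling swap. The reduction outputs the resulting tree $T_\phi$ together with a fixed copy of the true-gadget, so that the two are isomorphic if and only if $\phi$ evaluates to true. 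Each gate maps to a gadget whose bracket encoding depends only on the local gate label, and the overall tree shape is read directly off the formula string, so the whole map is computable by a $\mathsf{DLOGTIME}$ machine, yielding a $\le^\mathrm{DLT}$ reduction. This settles the colored case; the uncolored case follows by the standard device of encoding each color as a distinct rooted-path gadget attached to the corresponding node, which preserves both isomorphism and $\le^\mathrm{DLT}$-computability.
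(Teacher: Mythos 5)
The paper does not prove this lemma; it simply cites it as Theorem~3.3 of \cite{jmt98} (the $\mathsf{NC}^1$ upper bound itself goes back to Buss's $\mathsf{ALOGTIME}$ tree-isomorphism algorithm). Your attempt to actually prove it has genuine gaps in both halves. For membership, the depth accounting fails: $\mathsf{RAKE}$/$\mathsf{COMPRESS}$ contraction takes $\Theta(\log n)$ rounds, and each round manipulates canonical codes of polynomial length, so even comparing or sorting those codes costs $\Omega(\log n)$ depth in a bounded fan-in circuit (and constant depth only with unbounded fan-in, i.e.\ in $\mathsf{TC}^0$). Composing $\Theta(\log n)$ such rounds yields depth $O(\log^2 n)$, which places the algorithm in $\mathsf{NC}^2$ (or $\mathsf{TC}^1$), not $\mathsf{NC}^1$. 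This is exactly the obstruction that makes the $\mathsf{NC}^1$ upper bound nontrivial; Buss's proof does not canonize via tree contraction but uses an alternating logtime machine playing a carefully designed comparison game on the string representations.

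The hardness half also does not work as described. The isomorphism type of a rooted tree determines the multiset of isomorphism types of the subtrees hanging off its root, so a gate gadget that physically contains its children's gadgets as subtrees cannot have its iso-type ``collapse'' to one of two fixed constant-size types according to the gate's Boolean output: the iso-type necessarily remembers the entire subformula below it. In particular, your final test --- whether $T_\phi$ (of size $\Theta(|\phi|)$) is isomorphic to a fixed constant-size true-gadget --- can never succeed except for trivial formulas. The standard reduction from the Boolean Formula Value Problem instead constructs, for each subformula $\psi$, a \emph{pair} of trees $(T_0(\psi),T_1(\psi))$ that are isomorphic iff $\psi$ evaluates to true; $\mathsf{AND}$ is realized by juxtaposing the two pairs under distinguishable roots, $\mathsf{OR}$ by a cross-pairing gadget that combines $T_i(\psi_1)$ with $T_j(\psi_2)$, and negations are pushed to the leaves. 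Because the $\mathsf{OR}$ gadget duplicates subtrees, the formula must first be rebalanced to logarithmic depth to keep the output polynomial, and making the whole map $\mathsf{DLOGTIME}$-computable on the string encoding requires additional care. None of these ingredients appears in your sketch, so the reduction as written would not establish $\mathsf{NC}^1$-hardness.
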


\subsection{Hardness Results}\label{sub:hardness_results}

In this section, we present our hardness results for Modern and Kernelized Hopfield networks.

\begin{theorem}\label{thm:mhn_ugc}
Assume that $\mathsf{TC}^0 = \mathsf{NC}^1$. There is no $\poly(n)$-precision Modern Hopfield Networks with constant layers, and $O(n)$ hidden dimension can solve the undirected graph connectivity problem.
\end{theorem}

\begin{proof}
This result is obtained by integrating Theorem~\ref{thm:circuit_complexity_bound_mhn} (the circuit complexity bound of Modern Hopfield Networks), Lemma~\ref{lem:undirected_graph_connectivity} (showing that undirected graph connectivity is $\mathsf{NC}^1$-complete), and Fact~\ref{fac:fl}.
\end{proof}

\begin{theorem}\label{thm:mhn_tree_isomorphism}
Assume that $\mathsf{TC}^0 = \mathsf{NC}^1$. There is no $\poly(n)$-precision Modern Hopfield Networks with $O(1)$ layers, and $O(n)$ hidden dimension can solve the tree isomorphism problem.
\end{theorem}

\begin{proof}
This result derives from Theorem~\ref{thm:circuit_complexity_bound_mhn} and Lemma~\ref{lem:tree_isomorphism} (showing that tree isomorphism is $\mathsf{NC}^1$-complete).
\end{proof}

\begin{theorem}\label{thm:khm_ugc}
Assume that $\mathsf{TC}^0 = \mathsf{NC}^1$. There is no $\poly(n)$-precision Kernelized Hopfield Networks with constant layers, and $O(n)$ hidden dimension can solve the undirected graph connectivity problem.
\end{theorem}

\begin{proof}
This result is a direct consequence of  Theorem~\ref{thm:circuit_complexity_bound_khm} and Lemma~\ref{lem:undirected_graph_connectivity}.
\end{proof}

\begin{theorem}\label{thm:khm_tree_isomorphism}
Assume that $\mathsf{TC}^0 = \mathsf{NC}^1$. There is no $\poly(n)$-precision Kernelized Hopfield Networks with constant layers, and $O(n)$ hidden dimension can solve the tree isomorphism problem.
\end{theorem}

\begin{proof}
This result is a direct consequence of Theorem~\ref{thm:circuit_complexity_bound_khm} and Lemma~\ref{lem:tree_isomorphism}.
\end{proof}

\section{Modern Hopfield Model with Chain of Thought}\label{sec:mhm_cot}
We provide theoretical results for Modern Hopfield Networks that incorporate the Chain-of-Thought (CoT) mechanism. We first describe the architecture of the modern Hopfield model.

\subsection{Modern Hopfield Model Architecture with CoT}
Let $\mathcal{V}$ be the vocabulary, and consider a Hopfield model with parameters $\theta$ and a maximum input length $n_\mathrm{max}$. This model maps an input token sequence $(x_1, \ldots, x_n) \in \mathcal{V}^n$ (for $n \le n_\mathrm{max}$) to a probability distribution over $\mathcal{V}$, and we denote it by $p_\Theta (\cdot | x_1, \ldots, x_n)$. We use $\mathsf{MHM}_\theta(x)$ as the function that selects the token in $\mathcal{V}$ which can maximize $p_\Theta (\cdot | x_1, \ldots, x_n)$, defined as:
$
\mathsf{MHM}_\theta (x_1, \ldots, x_n) := \arg\max_{y \in \mathcal{V}} p_\theta(y|x_1, \ldots, x_n).  
$

We define the Next-token generator with parameters $\theta$ and a maximum input length $n_\mathrm{max}$ maps sequences from $\cup_{n=1}^{n_\mathrm{max}} \mathcal{V}^n$ to $\mathcal{V}$. It is recursively defined as:
\begin{align*}
\mathsf{MHM}_\theta^i(x_1, x_2, \ldots, x_n)
= \mathsf{MHM}_\theta^{i-1}(x_1, x_2, \ldots, x_n, \mathsf{MHM}_\theta^{i-1}(x_1, x_2, \ldots, x_n)).
\end{align*}

For every $i \ge 1$, provided $i + n \le n_\mathrm{max} - 1$, we have the base case:
$
\mathsf{MHM}_\theta^1(x_1, x_2 \ldots, x_n) = \mathsf{MHM}_\theta(x_1, x_2, \ldots, x_n).   
$

Architecture Overview: The architecture is similar to GPT-like models, comprising four main components:
1. A layer used for token embedding,
2. A layer used for position encoding,
3. An output linear layer, and
4. $L$ identical decoder layers.
Each decoder layer consists of a single Hopfield layer (see Definition~\ref{def:single_hopfield_layer}) and a two-layer ReLU feedforward network (see Definition~\ref{def:feed_forward_nns}). Together, these decoder layers form an $L$-layer modern Hopfield network (see Definition~\ref{def:modern_hopfield_networks}). The model parameters $\theta$ are organized as:
$
\theta = (\theta_\mathsf{PE}, \theta_\mathsf{TE}, \theta_\mathsf{OUTPUT}, \theta_\mathsf{MHN}),    
$
where $\theta_\mathsf{MHN} = \{\theta_\mathsf{Hop}^{(l)}, \theta_\mathsf{FNN}^{(l)}\}_{l=0}^{L-1}$ are trainable parameters (see Algorithm~\ref{alg:mhm}).

The token embedding layer parameterized by $\theta_\mathsf{TE} \in \mathbb{R}^{d \times |\mathcal{V}|}$ maps tokens in $\mathcal{V}$ to $\mathbb{R}^d$:
$
\theta_\mathsf{TE}(x), \quad \forall x \in \mathcal{V}.
$ The position encoding layer with parameters $\theta_\mathsf{PE} \in \mathbb{R}^{d \times |\mathcal{V}|}$ maps positions in $[n_\mathrm{max}]$ to $\mathbb{R}^d$:
$
\theta_\mathsf{PE}(n), \quad \forall n \in [n_\mathrm{max}].
$ The output layer parameterized by $\theta_\mathsf{OUTPUT} \in \mathbb{R}^{|\mathcal{V}| \times d}$ is defined by:
\begin{align*}
\mathsf{OUTPUT}_{\theta_\mathsf{OUTPUT}}(h) = \mathrm{softmax}(\theta_\mathsf{OUTPUT} h), \quad \forall h \in \mathbb{R}^d.    
\end{align*}

\begin{algorithm}[!ht]\caption{Modern Hopfield Model, $\mathsf{MHM}_\theta$ and $p_\theta$}\label{alg:mhm}
\begin{algorithmic}[1]
\State Input: Parameters $ \theta_\mathsf{PE}, \theta_\mathsf{TE}, \theta_\mathsf{OUTPUT}, \theta_\mathsf{MHN}$ and input tokens $(x_1, \ldots, x_n) \in \mathcal{V}^n$
\State Output: An output token $\mathsf{MHM}_\theta(x)$, an output distribution $p_\theta(\cdot|x_1,\ldots, x_i)$ for every $i \in [n]$.
\State $h_i^{(0)} \gets \theta_\mathsf{TE}(x_i) + \theta_\mathsf{PE}(i), \quad \forall i \in [n]$
\State $(h_1^{(1)}, \ldots, h_n^{(1)}) \gets \mathsf{MHN}_{\theta_\mathsf{MHN}}(h_1^{(0)}, \ldots, h_n^{(0)})$
\State $p_\theta(\cdot|x_1,\ldots, x_i) \gets \mathsf{OUTPUT}_{\theta_\mathsf{OUTPUT}}(h_i^{(1)})$ for $i \in [n]$
\State $\mathsf{MHM}_\theta (x) \gets \arg \max_y p_\theta(y|x_1, \ldots, x_n).$
\end{algorithmic}
\end{algorithm}

\subsection{Results of Hopfield with CoT}
In this subsection, we present the theoretical results for the modern Hopfield model. First, we show that the Hopfield update rule can achieve an attention mechanism.
\begin{lemma}[The Hopfield update rule can be interpreted as a form of the transformer attention mechanism, page 5 in \cite{rsl+21}]\label{lem:hopfield_attention}
Let $N$ be the number of stored (key) patterns, represented as row vectors $y_i$, and $S$ be the number of state (query) patterns, represented as row vectors $r_i$. Define $Y = (y_1, \ldots, y_N)^\top, R = (r_1, \ldots, r_S)^\top.
$ 
The Hopfield update rule can be expressed as:
\begin{align}\label{eq:hopfield_attention}
Z = & ~ \mathrm{softmax}(\frac{1}{\sqrt{d_k}} Q K^\top )
V \notag \\
= & ~ \mathrm{softmax}(\beta \, R W_Q W_K^\top Y^\top ) Y W_K W_V,
\end{align}
where the following hold:
$
X^\top = K = Y W_K, 
\Xi^\top = Q = R W_Q,  
V = Y W_K W_V = X^\top W_V,
$
and
$
W_K \in \mathbb{R}^{d_y \times d_k}, 
W_Q \in \mathbb{R}^{d_r \times d_k}, 
W_V \in \mathbb{R}^{d_k \times d_v}, $ and $
\beta = \frac{1}{\sqrt{d_k}}.
$ Here, $d_k$ is the dimension of the Hopfield space, $d_y$ and $d_r$ are the dimensions of the stored and state patterns, and $Z$ represents the attention scores.

The middle part of Eq.~\ref{eq:hopfield_attention} corresponds to the attention mechanism of the transformer. Specifically, when $R = Y$ and $W_K W_V$ are substituted with $W_V$, the Hopfield update rule is reduced to transformer self-attention.
\end{lemma}

\begin{definition}[Word Problem for Group $G$, \cite{llzm24}]
For a group $G$ and $n$ elements $(f_1, \ldots, f_n)$ from $G$, the word problem $\mathcal{L}_G$ is defined as the problem of determining whether the product $f_1 \circ f_2 \circ \cdots \circ f_n$ equals the identity element of $G$.
\end{definition}

\begin{lemma}[Theorem 3.5 in \cite{llzm24}]\label{lem:transformer}
Assuming $\mathsf{TC}^0 \subsetneq \mathsf{NC}^1$, Transformer with $n$ step CoT, $\log n$ embedding size, and constant precision can solve the wording problem $\mathcal{L}_{S_5}$, $S_5$ is permutation groups over 5 elements. However, a Transformer with constant depth, polynomial embedding size, and $\log n$ precision, but without CoT, cannot solve it.
\end{lemma}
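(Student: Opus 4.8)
The plan is to establish the two halves of the statement independently, since they rely on opposite ingredients: the negative half is a pure circuit-complexity argument, while the positive half is an explicit construction exploiting the sequential nature of chain-of-thought. For the lower bound (no CoT), I would first invoke the fact that a constant-depth, polynomial-size, $\log n$-precision Transformer without CoT is simulable by a $\mathsf{DLOGTIME}$-uniform $\mathsf{TC}^0$ circuit family --- the same style of simulation used throughout this paper for Hopfield layers, where softmax, matrix products, and feed-forward blocks all lie in $\mathsf{TC}^0$. Next I would recall that $\mathcal{L}_{S_5}$ is $\mathsf{NC}^1$-complete under $\mathsf{AC}^0$ reductions: since $S_5$ is non-solvable (its subgroup $A_5$ is simple and non-abelian), Barrington's theorem shows that iterated multiplication in $S_5$ captures exactly $\mathsf{NC}^1$. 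Combining these, if such a Transformer decided $\mathcal{L}_{S_5}$, we would obtain $\mathsf{NC}^1 \subseteq \mathsf{TC}^0$ and hence $\mathsf{TC}^0 = \mathsf{NC}^1$, contradicting the hypothesis $\mathsf{TC}^0 \subsetneq \mathsf{NC}^1$.

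For the upper bound (with CoT), I would build a Transformer that computes the running group product one factor at a time across its $n$ CoT steps. The key observation is that each element of $S_5$ is one of only $120$ symbols, so it fits in $O(1)$ bits, and the group multiplication $S_5 \times S_5 \to S_5$ is a fixed finite table that a constant-depth ReLU feed-forward block can realize exactly. At CoT step $t$, the positional encoding (which fits comfortably within the $\log n$ embedding width) lets attention retrieve two tokens, namely the partial product $p_{t-1}$ written at the previous step and the next input factor $f_t$; the feed-forward block then emits $p_t = p_{t-1}\circ f_t$ as the freshly generated token. After $n$ steps the model has produced $p_n = f_1 \circ \cdots \circ f_n$, and a final output head tests $p_n$ against the identity of $S_5$.

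The main obstacle I anticipate lies in the positive construction: making the attention step select exactly the right two positions under the constant-precision constraint. I would need to design the query/key inner products so that the target logits are separated by a margin that survives rounding to constant precision, and then argue that the softmax concentrates essentially all of its mass on the intended tokens. This ``hard attention via well-separated logits'' argument, together with verifying that the finite multiplication table is implementable within the available depth, is where the real care is required. The circuit-complexity lower bound, by contrast, reduces to citing the $\mathsf{TC}^0$ simulation and Barrington's completeness result.
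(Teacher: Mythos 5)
The paper does not actually prove this lemma: it is imported verbatim as Theorem~3.5 of \cite{llzm24}, and the surrounding text only uses it as a black box to deduce Theorem~\ref{thm:mhm}. Your reconstruction matches the standard argument behind that cited result: the negative half is exactly the combination of the Merrill--Sabharwal-style $\mathsf{TC}^0$ simulation of constant-depth, $\log n$-precision Transformers with Barrington's theorem (non-solvability of $S_5$ makes $\mathcal{L}_{S_5}$ $\mathsf{NC}^1$-complete), and the positive half is the sequential ``write the running partial product as the next CoT token'' construction, where the group operation is a finite table computable by a constant-depth feed-forward block. You have also correctly identified the only genuinely delicate point, namely forcing attention to concentrate on the previous partial product and the current input factor under constant precision; this is precisely the technical work carried out in \cite{llzm24} via carefully designed positional encodings and a margin argument for the softmax, so your plan is sound, though a complete proof would need to spell that step out rather than anticipate it.
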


The proof of Theorem~\ref{thm:mhm} is a direct consequence of Lemma~\ref{lem:transformer}.

\begin{theorem}\label{thm:mhm}
Assuming $\mathsf{TC}^0 \subsetneq \mathsf{NC}^1$, modern Hopfield model with $n$ step CoT, embedding size $\log n$, and constant precision can solve the word problem $\mathcal{L}_{S_5}$. However, the modern Hopfield model with a constant depth, polynomial embedding size $\mathrm{poly}(n)$, and $\log n$ precision, but without CoT, cannot solve it.
\end{theorem}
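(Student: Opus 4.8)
The plan is to exploit the tight architectural correspondence between modern Hopfield layers and transformer self-attention recorded in Lemma~\ref{lem:hopfield_attention}, and then transfer both halves of Lemma~\ref{lem:transformer} across this correspondence. The decoder layers of the $\mathsf{MHM}_\theta$ architecture (token embedding, position encoding, output layer, and the Hopfield-plus-feedforward blocks) are deliberately set up to mirror a GPT-style transformer, so the CoT recursion $\mathsf{MHM}_\theta^i$ is identical in form to the chain-of-thought recursion used for transformers. Consequently the chain-of-thought mechanism itself is architecture-agnostic, and the only thing I need to establish is that a single Hopfield layer can realize, and be realized by, a single attention layer under matching resource budgets.

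For the solvability claim (with CoT), first I would invoke Lemma~\ref{lem:hopfield_attention}: setting $R = Y$ and replacing the product $W_K W_V$ by a single parameter matrix reduces the Hopfield update rule exactly to transformer self-attention. Hence any transformer layer can be reproduced by a Hopfield layer with the corresponding weights, with no change in embedding size or precision. Taking the constant-precision, $\log n$-embedding transformer guaranteed by the positive part of Lemma~\ref{lem:transformer} to solve $\mathcal{L}_{S_5}$ with $n$-step CoT, I would instantiate the matching Hopfield model and feed it through the same $n$-step recursion; since each step is simulated faithfully, the resulting modern Hopfield model solves $\mathcal{L}_{S_5}$ within the stated budget.

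For the unsolvability claim (without CoT), I would argue that a constant-depth, $\poly(n)$-embedding, $\log n$-precision modern Hopfield model without CoT is computable in $\mathsf{DLOGTIME}$-uniform $\mathsf{TC}^0$, by the same component-by-component accounting used in Lemma~\ref{lem:mhn_computation_tc0} and Theorem~\ref{thm:circuit_complexity_bound_mhn} (log-precision is only more restrictive than $\poly(n)$-precision, and a $\poly(n)$ embedding still keeps every block inside $\mathsf{TC}^0$). Since the word problem $\mathcal{L}_{S_5}$ is $\mathsf{NC}^1$-complete via Barrington's theorem, and we work under the hypothesis $\mathsf{TC}^0 \subsetneq \mathsf{NC}^1$, no $\mathsf{TC}^0$ family can decide $\mathcal{L}_{S_5}$; therefore the Hopfield model without CoT cannot solve it. Equivalently, one may route this directly through Lemma~\ref{lem:transformer} by observing that the Hopfield layer is a special case of generalized attention, so the no-CoT Hopfield model is subsumed by the no-CoT transformer whose inability is already asserted there.

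The main obstacle I anticipate is bookkeeping the resource budgets across the correspondence rather than the logical structure, which is routine. For the positive half I must check that the reduction $R = Y$, $W_K W_V \mapsto W_V$ does not inflate the $\log n$ embedding size or the constant precision; this is immediate since it is an exact substitution. For the negative half the delicate point is the softmax normalization and the log-sum-exp energy in the Hopfield update: I would verify, using Lemma~\ref{lem:operations_tc0} and Lemma~\ref{lem:exp_tc0}, that these introduce no computation outside $\mathsf{TC}^0$ beyond what softmax attention already requires, so that the $\poly(n)$-embedding, $\log n$-precision model genuinely stays inside $\mathsf{TC}^0$ and the $\mathsf{NC}^1$-completeness barrier applies.
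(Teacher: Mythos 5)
Your proposal is correct and follows essentially the same route as the paper, which simply declares the theorem a direct consequence of Lemma~\ref{lem:transformer} via the Hopfield--attention correspondence of Lemma~\ref{lem:hopfield_attention}; your write-up is in fact considerably more explicit than the paper's one-line justification, particularly in spelling out the resource bookkeeping and the $\mathsf{TC}^0$ argument for the negative half.
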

\section{Conclusion}\label{sec:conclusion}

In this study, we conduct a comprehensive theoretical investigation of Modern Hopfield Networks and Kernelized Hopfield Networks, establishing key limitations on their computational power. Our analysis focuses on the circuit complexity of individual architectural components, showing that these networks can be simulated using uniform $\mathsf{TC}^0$ circuits. Crucially, we prove that unless $\mathsf{TC}^0 = \mathsf{NC}^1$, both modern and Kernelized Hopfield Networks, when implemented with polynomial precision, a constant number of layers, and hidden dimensions satisfying $d \leq O(n)$, are unable to solve problems such as undirected graph connectivity and tree isomorphism. 
However, we demonstrate that this theoretical ceiling is not insurmountable: by incorporating a Chain-of-Thought mechanism, Modern Hopfield Networks can transcend the $\mathsf{TC}^0$ barrier and successfully solve such $\mathsf{NC}^1$-hard problems.



\ifdefined\isarxiv
\bibliographystyle{alpha}
\bibliography{ref}
\else
\bibliographystyle{iclr2026_conference}
\bibliography{ref}
\fi






\end{document}